\newcommand{\system}{Dordis}
\newcommand{\PHB}[1]{\noindent\textbf{#1}\hspace{.5em}} 
\newcommand{\PHM}[1]{\vspace{.2em}\noindent\textbf{#1}\hspace{.5em}} 
\newtheorem{lemma}{Lemma}
\newtheorem{theorem}{Theorem}
\newtheorem{definition}{Definition}
\setlist{leftmargin=3mm, itemsep=0mm}
\newtheoremstyle{case}{}{}{}{}{}{:}{ }{}
\theoremstyle{case}
\DeclareMathOperator*{\argmin}{arg\,min}
\definecolor{myred}{rgb}{0.58, 0.06, 0}
\newcommand{\xnoise}[1]{\textcolor{myred}{\underline{#1}}}
\definecolor{mygreen}{rgb}{0.88, 0.91, 0.89}
\newcommand{\reuse}[1]{\hl{#1}}
\newcommand{\malicious}[1]{\textit{[#1]}}
\newtheorem*{rep@theorem}{\rep@title}
\newcommand{\newreptheorem}[2]{%
	\newenvironment{rep#1}[1]{%
		\def\rep@title{#2 \ref{##1}}%
		\begin{rep@theorem}}%
		{\end{rep@theorem}}}
\def\@ACM@checkaffil{
    \if@ACM@instpresent\else
    \ClassWarningNoLine{\@classname}{No institution present for an affiliation}%
    \fi
    \if@ACM@citypresent\else
    \ClassWarningNoLine{\@classname}{No city present for an affiliation}%
    \fi
    \if@ACM@countrypresent\else
        \ClassWarningNoLine{\@classname}{No country present for an affiliation}%
    \fi
}
\begin{document}

\title{\system{}: Efficient Federated Learning with Dropout-Resilient Differential Privacy}

\author{Zhifeng Jiang}
\affiliation{%
  \institution{HKUST}
}
\email{zjiangaj@cse.ust.hk}

\author{Wei Wang}
\affiliation{%
  \institution{HKUST}
}
\email{weiwa@cse.ust.hk}

\author{Ruichuan Chen}
\affiliation{%
  \institution{Nokia Bell Labs}
}
\email{ruichuan.chen@gmail.com}


\begin{abstract}
Federated learning (FL) is increasingly deployed among multiple clients to train a shared model over decentralized data.
To address privacy concerns, FL systems need to safeguard the clients' data from disclosure during training and control data leakage through trained models when exposed to untrusted domains.
Distributed differential privacy (DP) offers an appealing solution in this regard as it achieves a balanced tradeoff between privacy and utility without a trusted server.
However, existing distributed DP mechanisms are impractical in
the presence of \emph{client dropout}, resulting in poor privacy guarantees or degraded training accuracy.
In addition, these mechanisms suffer from severe efficiency issues.

We present \system{}, a distributed differentially private
FL framework that is highly efficient and resilient to client dropout.
Specifically, we develop a novel `add-then-remove' scheme that enforces a required noise level precisely in each training round, even if some sampled clients drop out.
This ensures that the privacy budget is utilized prudently, despite unpredictable client dynamics.
To boost performance, \system{} operates as a distributed parallel architecture via encapsulating the communication and computation operations into stages.
It automatically divides the global model aggregation into several chunk-aggregation tasks and pipelines them for optimal speedup.
Large-scale deployment evaluations demonstrate that \system{} efficiently handles client dropout in various realistic FL scenarios, achieving the optimal privacy-utility tradeoff and accelerating training by up to 2.4$\times$ compared to existing solutions.
\end{abstract}

\begin{CCSXML}
    <ccs2012>
       <concept>
           <concept_id>10010147.10010257.10010258.10010259</concept_id>
           <concept_desc>Computing methodologies~Supervised learning</concept_desc>
           <concept_significance>500</concept_significance>
           </concept>
       <concept>
           <concept_id>10002978.10003018</concept_id>
           <concept_desc>Security and privacy~Database and storage security</concept_desc>
           <concept_significance>300</concept_significance>
           </concept>
     </ccs2012>
\end{CCSXML}

\ccsdesc[500]{Computing methodologies~Supervised learning}
\ccsdesc[300]{Security and privacy~Database and storage security}

\keywords{Federated Learning, Distributed Differential Privacy, Client Dropout, Secure Aggregation, Pipeline}

\received{24 May 2023}
\received[accepted]{12 September 2023}

\maketitle

\section{Introduction}
\label{sec:introduction}

Federated learning (FL)~\cite{mcmahan2017communication,kairouz2019advances} enables collaborative training of a shared model among multiple clients (e.g., mobile and edge devices) under the orchestration of a central server.
In scenarios where the number of clients is large (e.g., millions of mobile devices), the server dynamically \emph{samples} a small subset of clients to participate in each training round~\cite{bonawitz2019towards}.
These clients download the global model from the server, compute local updates using private data, and upload these updates to the server for global aggregation.
Throughout the training, no client's data is exposed directly.
FL has been deployed in various domains, enabling a multitude of privacy-sensitive AI
applications~\cite{google2021assistant, paulik2021federated, ludwig2020ibm,
webank2020laundering, li2019privacy, nvidia2020oxygen}.

However, solely keeping client data on the device is inadequate for preserving data privacy.
Recent work has shown that sensitive data can still be exposed through message exchanges in  FL training~\cite{geiping2020inverting, yin2021see, wang2022protect, zhao2023secure}. 
It is also possible to infer a client's data from the trained models~\cite{carlini2019secret, song2019auditing, shokri2017membership, nasr2019comprehensive} by exploiting their ability to memorize information~\cite{carlini2019secret, song2019auditing}.

Current FL systems often use \emph{differential privacy} (DP)~\cite{dwork2014algorithmic} to perturb the aggregate model update in each round, limiting the disclosure of individual clients' data throughout training.
Among the three typical DP models (i.e., central~\cite{mcmahan2018learning, ramaswamy2020training}, local~\cite{pihur2018differentially}, and distributed DP~\cite{kairouz2021distributed, agarwal2021skellam, stevens2022efficient}), \emph{distributed DP} is the most appealing for FL as it: 1) assumes \emph{no trusted server} (in contrast to central DP), and 2) imposes \emph{minimum noise} given a privacy budget (unlike local DP), causing little utility loss to the trained models.
Specifically, given a global privacy budget that must not be exceeded, the system first calculates the minimum random noise required in each round.
Then, in each round, every sampled client adds a small portion of the required noise to its local update.
The aggregate update at the server is thus perturbed by \emph{exactly} the minimum required noise.
In distributed DP, local updates are aggregated using secure aggregation, which ensures that the (untrusted) server learns only the aggregate result, not individual updates.

Existing distributed DP mechanisms, however, face two practical challenges when deployed in real-world FL systems.
First, the privacy guarantee of distributed DP can be compromised in the presence of \emph{client dropout}, which can occur at any time due to network errors, low battery, or changes in eligibility, as frequently observed in production~\cite{bonawitz2019towards, yang2021characterizing, lai2022fedscale}.
For dropped clients, their noise contributions are missing in the aggregate update, leading to insufficient DP protection and consuming more privacy budget than initially allocated in each round~\cite{kairouz2021distributed}.
This exhausts the privacy budget quickly, resulting in early termination of training and significant model utility loss (\cref{sec:issue_client_dropout}).
As a quick solution, proactively increasing the amount of noise used may not achieve the optimal privacy-utility tradeoff without strong expertise in client dynamics and learning tasks.

Second, aside from the privacy issue caused by client dropout, the secure aggregation protocols (e.g., SecAgg \cite{bonawitz2017practical}) employed in distributed DP present severe efficiency challenges.
This is due to the multi-round communications and heavy cryptographic computations involved, which can consume up to 97\% of the time of each training round, as observed in real deployments (\cref{sec:secagg_issue}).
Recent attempts~\cite{bell2020secure, so2021turbo, kadhe2020fastsecagg, so2022lightsecagg} to improve the asymptotic complexity sacrifice desired properties (e.g., malicious security or dropout tolerance), and/or have limited practical gains.

In this paper, we present \system{}, an efficient FL framework that enables \underline{\textbf{d}}rop\underline{\textbf{o}}ut-\underline{\textbf{r}}esilient \underline{\textbf{dis}}tributed DP in FL.
\system{} addresses both the privacy and efficiency issues of distributed DP in FL systems with two key contributions.
First, to ensure that the aggregate model update is perturbed with exactly the minimum required noise regardless of client dropout, we
devise a novel \emph{add-then-remove} scheme named \texttt{XNoise} (\cref{sec:enforcement}).
\texttt{XNoise} initially lets each selected client to add excessive noise to its local update.
After aggregation, the server removes part of the excessive noise that exceeds the minimum noise requirement, based on the actual client participation. 
To cope with the potential failures in executing \texttt{XNoise}, we consolidate its security using efficient cryptographic primitives.
\texttt{XNoise} is proven to preserve the privacy of honest clients, even in the presence of a \emph{malicious} server colluding with a small subset of other clients.

Second, to expedite the execution of distributed DP, we run \system{} as a \emph{distributed parallel architecture} to overlap computation- and communication-intensive operations (e.g., data encoding and transmission) (\cref{sec:pipeline}).
To enable a generic design, \system{} first abstracts the distributed DP workflow into a sequence of \emph{stages} with different dominant system resources.
By dividing the global aggregation task into several \emph{chunk}-aggregation task, \system{} allows pipeline parallelism by scheduling them to run different stages concurrently.
With a realistic performance model and profiling technique, \system{} can identify the optimal pipeline configuration by solving an optimization problem for \emph{maximum speedup}.

We implemented \system{} as an end-to-end system with generic designs that support a wide range of distributed DP protocols (\cref{sec:implementation}).\footnote{\system{} is available at \url{https://github.com/SamuelGong/Dordis}.}
We deployed \system{} in a real distributed environment that
features data and hardware heterogeneity of client devices (\cref
{sec:evaluation}).
Our evaluations across various FL training tasks show that the necessary noise for aggregated updates in \system{} can be enforced without impairing model utility in the presence of client dropout.
Moreover, \system{}'s pipeline execution speeds up the baseline systems with different secure aggregation protocols by up to 2.4$\times$, without reducing their security properties.

\section{Background and Motivation}
\label{sec:background}

\subsection{Scenario}
\label{sec:background_threat}

\PHB{Federated Learning.} Federated learning (FL) enables a large number
 of clients (e.g., millions of mobile and edge devices) to
 collaboratively build a global model without sharing local data~\cite
 {mcmahan2017communication,kairouz2019advances}. In FL, a (logically) centralized server maintains the global model and orchestrates the
 iterative training. At the beginning of each training round, the
 server randomly samples a subset of available clients as
 participants~\cite{bonawitz2019towards}. The sampled clients perform local
 training to the downloaded global model using their private data and report only
 the model updates to the server. The server collects the updates from participants until a certain
 deadline, and aggregates these updates. It then uses the aggregate update (e.g., FedAvg~\cite
 {mcmahan2017communication}) to refine the global model.

\PHM{Threat Model.} 
Although client data is not directly exposed in the FL process,
a large body of research has shown that it is still possible to reveal sensitive client data
from individual updates or trained models via \emph
{data reconstruction} or \emph{membership inference} attacks.
For example, an adversary can accurately reconstruct a client's training data from its gradient updates~\cite{geiping2020inverting, yin2021see, wang2022protect, zhao2023secure}; an adversary can also infer from a
trained language model whether a client's private text is in the training
set~\cite{carlini2019secret, song2019auditing, shokri2017membership, nasr2019comprehensive}. 

We aim to control the exposure of honest clients' data against the above-mentioned attacks. Following the Secure Multi-Party Computation (SMPC) literature~\cite{evans2018pragmatic}, we target both the \emph{semi-honest} setting (where all parties faithfully follow the protocol) and the \emph{malicious} setting (where the adversary can deviate arbitrarily from the protocol). In both settings, the adversary is eager to pry on honest clients' data, and may collude with the server and a fraction of sampled clients to boost its advantages.
We assume mild client collusion because, in real deployments, the number of clients is usually large, making it hard for an adversary to corrupt a large fraction of them.
For example, at the scale of the Apple ecosystem (over 2 billion active devices~\cite{apple2023report}), even compromising 1\% would mean about 20 million nodes.
Consequently, the chance of having many colluded clients sampled by the server is tiny, if it follows the agreed-upon sampling algorithm.
Even if the server behaves maliciously in client sampling, it can still be refrained from impersonating or simulating arbitrarily many clients given the use of a public-key infrastructure and a signature scheme, with which honest clients can verify the source of their received messages and detect such behaviors (\cref{sec:enforcement_consolidation}).
In addition, we can also prevent the server from cherry-picking colluded clients when verifiable random functions (VRFs)~\cite{micali1999verifiable, dodis2005verifiable} are deployed to ensure the correctness of random client sampling (\cref{sec:discussion}).

\begin{figure*}[t]
	\centering
	\begin{subfigure}[b]{0.15\linewidth}
		\centering
		\includegraphics[width=\columnwidth]{./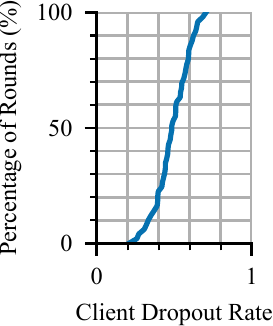}
		\caption{Client dynamics.}
		\label{fig:motivation_privacy_trace_dropout}
	\end{subfigure}\hfill
	\begin{subfigure}[b]{0.29\linewidth}
		\centering
		\includegraphics[width=\columnwidth]{./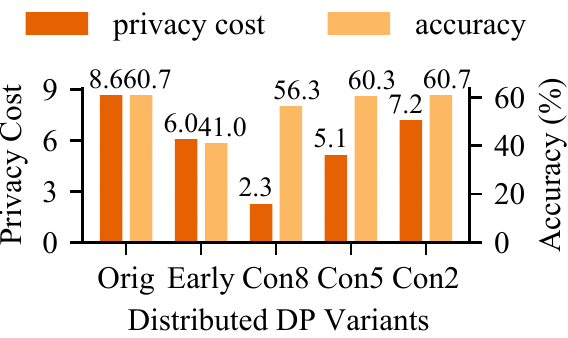}
		\caption{Privacy v.s. utility in CIFAR-10.}
		\label{fig:motivation_privacy_trace_privacy_utility_cifar10}
	\end{subfigure}\hfill
	\begin{subfigure}[b]{0.29\linewidth}
		\centering
		\includegraphics[width=\columnwidth]{./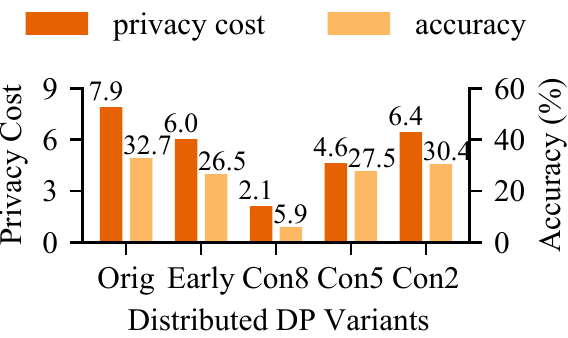}
		\caption{Privacy v.s. utility in CIFAR-100.}
		\label{fig:motivation_privacy_trace_privacy_utility_cifar100}
	\end{subfigure}\hfill
	\begin{subfigure}[b]{0.22\linewidth}
		\centering
		\includegraphics[width=\columnwidth]{./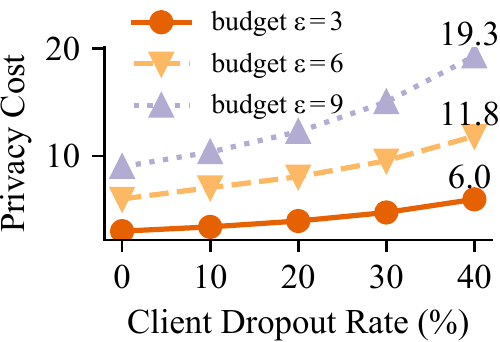}
		\caption{Privacy impact under various client dropout rates.}
		\label{fig:motivation_privacy_simulation_dropout_rate}
	\end{subfigure}
	\caption{Privacy impact of client dropout.}
	\label{fig:motivation_privacy}
\end{figure*}

\subsection{Differential Privacy}
\label{sec:background_dp}

The reconstruction and membership attacks work by finding clients' data that make the observed messages (e.g., individual updates or trained models) more likely.
Differential privacy (DP)~\cite
{cynthia2006differential,dwork2006calibrating,dwork2014algorithmic} effectively prevents these attacks by ensuring that no specific client participation can noticeably
increase the likelihood of such observed messages. This guarantee is captured by two parameters, $\epsilon$ and $\delta$~\cite{dwork2014algorithmic}. Given any
neighboring training sets $D$ and $D^{'}$ that differ only in the
inclusion of a single client's data, the aggregation procedure $f$ is
($\epsilon$, $\delta$)-differentially private if, for any set of
output $R$, we have $\text{Pr}[f(D) \in R] \leq e^{\epsilon} \cdot \text{Pr}[f(D') \in R] + \delta.$

In other words, a change in a client's participation yields at most a
multiplicative change of $e^{\epsilon}$ in the probability of any output, 
except with a probability $\delta$. Intuitively, smaller
$\epsilon$ and $\delta$ indicate a stronger privacy guarantee.
A key property of DP is composition which states that the process of running ($\epsilon_1$, $\delta_1$)-DP and ($\epsilon_2$, $\delta_2$)-DP computations on the same dataset is ($\epsilon_1+\epsilon_2$, $\delta_1+\delta_2$)-DP.
This allows one to account for the privacy loss resulting from a sequence of DP-computed outputs, such as the release of multiple aggregate updates in FL.

\PHM{Central DP and Local DP.}
One way to apply DP in FL is to let the server add DP-compliant noise to
the aggregate update, i.e., the \emph{central DP} scheme~\cite
{kairouz2019advances}. However, the server must be trusted as
it has access to the (unprotected) aggregate update. While the
server may establish a trusted execution environment (TEE)~\cite
{sgx,trustzone} with hardware support, it 
is still vulnerable to various attacks, e.g., side-channel attacks~\cite
{van2018foreshadow, chen2021voltpillager}. An alternative DP scheme is \emph
{local DP}, in which each sampled client adds DP noise to perturb its
local update. As long as the noise added by a client
is sufficient for a DP guarantee on its own, its privacy is
preserved regardless of the behavior of other clients or the server. This, however, results in excessive accumulated noise in the aggregate
update, significantly harming the model utility~\cite
{kairouz2021distributed}.

\PHM{Distributed DP.} 
Compared to central and local DP, \emph{distributed DP} offers an appealing
solution in FL scenarios as it: 1) requires no trusted server, and 2) imposes
minimum noise given a privacy budget.
In distributed DP, a
privacy goal is specified as a global privacy budget $
(\epsilon_G, \delta_G)$, which can be viewed as a 
non-replenishable resource that is consumed by each release of an
aggregate update. Ideally, by the time when the training completes, 
the remaining privacy budget should be zero, so as to meet the privacy goal at
the expense of minimum DP noise and model utility loss.

This requires the system to perform \emph{offline noise planning} ahead of time to determine the minimum required noise 
that should be added to the aggregate update in each training round to control 
the privacy loss.
The system then proceeds to \emph{online noise enforcement}. In each training
round, it evenly splits the noise adding task to all sampled clients. Each of
them slightly perturbs its update by adding an even share of the
minimum required noise. The clients then mask their updates and send them to the
server using the secure aggregation (SecAgg) protocol~\cite
{bonawitz2017practical}, which ensures that the server learns nothing but the
aggregate update that is perturbed with exactly the minimum required noise.
Note that, besides the commonly-used SecAgg, distributed DP can also be implemented using
alternative approaches such as secure shuffling~\cite
{bittau2017prochlo, erlingsson2019amplification, cheu2019distributed}. In
this paper, we focus on the approaches using SecAgg, given their popularity in
FL.

\subsection{Practical Issues of Distributed DP}
\label{sec:background_prior}

While distributed DP can achieve an appealing privacy-utility
tradeoff, its deployment in real world has significant issues.

\subsubsection{Privacy Issue Caused by Client Dropout}
\label{sec:issue_client_dropout}

In FL training, \emph{client dropout} can occur
anytime, e.g., due to low battery, poor connection, or switching to a metered
network.
The prevalence of client dropout, which has been widely observed in
real-world systems~\cite{yang2021characterizing, lai2022fedscale,
bonawitz2019towards}, raises a severe privacy issue in distributed DP.
Specifically, if clients
drop out after being sampled, without their noise contributions, the
total noise added to the aggregate update falls below the minimum required
level. This leads to increased data exposure that forces the system to consume more privacy
budget than planned for each round.
Without the ability to deterministically enforce the consumption of privacy budget, the system may fail to incentivize clients to join or comply with privacy regulations~\cite{piper2000personal, voigt2017eu, pardau2018california}.

To illustrate this issue, we analyze a realistic FL task. We run two FL testbeds in which 100 clients jointly train a ResNet-18~\cite{abadi2016deep} model over the CIFAR-10 dataset and CIFAR-100 dataset~\cite{krizhevsky2009learning} for 150 and 300 rounds, respectively.
To emulate the dynamics of clients, we use a large-scale user behavior dataset spanning 136k mobile devices~\cite{yang2021characterizing} and extract 100 volatile users.
We sample 16 clients to train in each round and observe great dynamics in their availability, as shown in Figure~\ref{fig:motivation_privacy_trace_dropout}.
In this case, the original distributed DP training over CIFAR10 (resp. CIFAR100) with a global privacy budget $\epsilon_G=6$ ends up consuming an $\epsilon$ of $8.6$ (resp. $7.9$) at the 150th (resp. 300th) round due to the missing noise from the dropped clients (see \texttt{Orig} in Figure~\ref{fig:motivation_privacy_trace_privacy_utility_cifar10} and~\ref{fig:motivation_privacy_trace_privacy_utility_cifar100}).

\PHM{Naive Solutions and Limitations.} One solution to this issue is to stop the training early when
the privacy budget runs out (\texttt{Early}). However, this inevitably harms the model
utility.
As shown in Figure~\ref{fig:motivation_privacy_trace_privacy_utility_cifar10} and~\ref{fig:motivation_privacy_trace_privacy_utility_cifar100},
\texttt{Early} reduces the model utility by 19-29\% compared to non-private training.
Another solution is to make a conservative estimation on the per-round dropout rate during offline noise planning.
However, a good estimation that balances the privacy-utility tradeoff requires accurate information on client dynamics and learning tasks.
For instance, without accurate information on client dynamics (e.g., Figure~\ref{fig:motivation_privacy_trace_dropout}), one common practice is to overestimate the dropout severity (e.g., 80\% as in \texttt{Con8}) which leads to suboptimal model utility, while underestimating it (e.g., 20\% as in \texttt{Con2}) results in excessive privacy budget consumption.
Even given a priori knowledge of client dynamics, the trade-off is still hard to navigate without trial-and-error experiments due to its task-specific nature.
For example, while guessing 50\% as the per-round dropout rate (\texttt{Con5}) yields a near-optimal privacy-utility tradeoff for CIFAR-10, it causes noticeable utility degradation (16\%) for CIFAR-100.

\PHM{Impact of Client Dropout Rate.}
To further relate privacy violation to dropout severity, we let the clients randomly drop with a configurable
rate after being
sampled.
Figure~\ref{fig:motivation_privacy_simulation_dropout_rate} shows that, for the CIFAR-10 testbed, as the dropout rate
increases, more clients' data gets exposed during training, leading to a larger privacy deficit regardless of the budget.

\begin{figure}[t]
	\begin{subfigure}[b]{1.0\columnwidth}
		\centering
		\includegraphics[width=\columnwidth]{./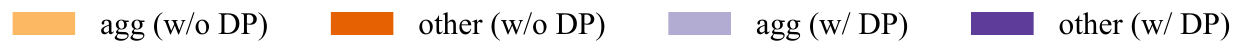}
	\end{subfigure}
	\begin{subfigure}[b]{0.49\columnwidth}
		\centering
		\includegraphics[width=\columnwidth]{./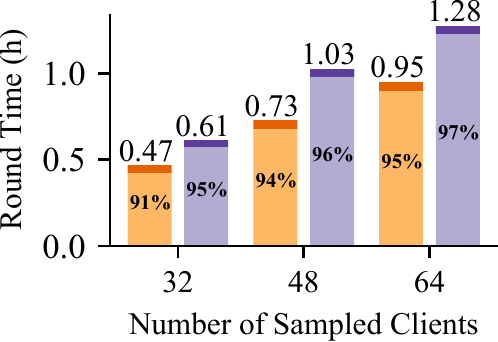}
		\caption{SecAgg with per-round dropout rate being 10\%.}
		\label{fig:motivation_testbed_efficiency_2}
	\end{subfigure}
	\begin{subfigure}[b]{0.49\columnwidth}
		\centering
		\includegraphics[width=\columnwidth]{./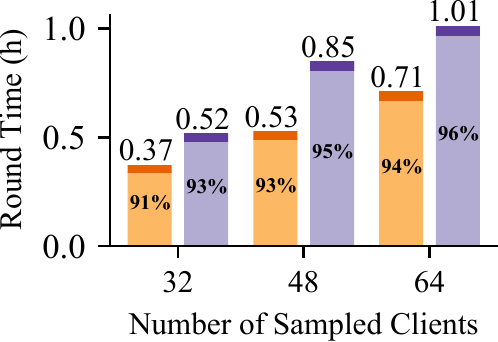}
		\caption{SecAgg+ with per-round dropout rate being 10\%.}
		\label{fig:motivation_testbed_efficiency_4}
	\end{subfigure}
	\caption{Impact of secure aggregation on training efficiency.}
	\label{fig:motivation_testbed}
\end{figure}

\subsubsection{Performance Issue Caused by Secure Aggregation}
\label{sec:secagg_issue}

Besides the privacy issue caused by client dropout, the SecAgg
algorithm~\cite{bonawitz2017practical} used in distributed DP creates a severe performance issue.
Specifically, to
ensure the server learns no individual update from any client but the
aggregate update only, SecAgg lets clients synchronize secret keys and use
them to generate zero-sum masks (a detailed description of SecAgg is embedded in Figure~\ref{fig:enforcement_integration}).
This involves extensive
use of pairwise masking and secret sharing, incurring high complexity in computation and communication.

To quantify the performance impact of SecAgg, we refer to Figure~\ref{fig:motivation_testbed_efficiency_2} which shows the breakdown of the average
runtime of one training round in the previous experiments with varying numbers of sampled clients.
For comparison, we also run the experiment with SecAgg but
add no DP noise to the aggregate update.
In all experiments, the  cost of SecAgg dominates, accounting for 86-91\%
of the training time, while SecAgg with DP features a slightly more serious bottleneck than that without DP.
Furthermore, the dominance of SecAgg is accentuated at scale.

\PHM{Existing Solutions and Limitations.} 
There have been active studies on improving the performance of secure aggregation, but they all have significant limitations.
For example, one guarantee provided by SecAgg is input privacy against malicious adversaries, while TurboAgg~\cite{so2021turbo}, FastSecAgg~\cite{kadhe2020fastsecagg}, and LightSecAgg~\cite{so2022lightsecagg} only handle a semi-honest adversary.
Moreover, their improved complexity comes at the cost of degraded dropout tolerance~\cite{jahani2023swiftagg+, liu2023dhsa}, with their communication cost still being high in FL practice~\cite{ma2023flamingo}.
Among the follow-up works of SecAgg, SecAgg+~\cite{bell2020secure} is the state-of-the-art which improves the asymptotic complexity with a slight compromise on security and robustness.
Yet, as Figure~\ref{fig:motivation_testbed_efficiency_4} implies, a further improvement is still desired given the consistent dominance of SecAgg+ on the training time. 

\section{Dropout-Resilient Noise Enforcement}
\label{sec:enforcement}

To tackle the privacy issue mentioned in~\cref{sec:issue_client_dropout}, we first formalize the noise enforcement problem under client dropout, and present the technical intuition to address this problem (\cref{sec:enforcement_intuition}). We then
describe a novel `add-then-remove' noise enforcement approach that realizes this intuition (\cref{sec:enforcement_add}) with security consolidation in real deployments (\cref{sec:enforcement_consolidation}),
followed by a security analysis (\cref{sec:enforcement_security}).
Without loss of generality, we assume that the random noise distribution $\chi
(\sigma^2)$ used in DP is \emph{closed under summation} w.r.t. the variance
$\sigma^2$. That is, given two independent noises $X_1 \sim \chi(\sigma_1^2)$ and
$X_2 \sim \chi(\sigma_2^2)$,
we have
$X_1 \pm X_2 \sim \chi(\sigma_1^2 + \sigma_2^2)$. For example, both Gaussian and 
Skellam~\cite{agarwal2021skellam} distributions exhibit this property. 

\subsection{Technical Intuition}~\label{sec:enforcement_intuition}

We start with a formal description of the
original noise addition process used in distributed DP, denoted as \texttt{Orig}.
\begin{definition}[\textup{\texttt{Orig}}]
    Given the set of sampled clients $U$ and the target noise level $\sigma^2_*$
    in a certain round, \textup{\texttt{Orig}} lets each client $c_j \in U$ perturb
    its update $\bm{\Delta}_j$ by adding noise $\bm{n}_j \sim \chi(\sigma^2_* / \lvert U \rvert)$ and upload the result $\tilde{\bm{\Delta}_j} =\bm{\Delta}_j + \bm{n}_j$ to the server for aggregation.
    \label{def:orig}
\end{definition}

As described in \cref
{sec:issue_client_dropout}, the problem with \texttt{Orig} is that when some clients drop after being sampled, their noise contributions are missing; thus, the eventual noise aggregated at the server
will be insufficient.
One potential fix is to let the server add back the
missing noise contributed by the dropped clients~\cite
{kairouz2021distributed}.
However, this is not viable under our threat model (\cref{sec:background_threat}) in which the server can be part of the adversary who can infer clients' data from the insufficiently perturbed aggregate result with unbounded advantages (semi-honest) and/or even omit the noise addition task (malicious).

\PHM{Add-Then-Remove Noise Enforcement.}
To ensure that the aggregate noise never goes inefficient and always lands at the minimum required level, we design an `add-then-remove' noise enforcement scheme: 1) each sampled client first adds a higher-than-required amount of noise  to its model update, rendering an overly perturbed aggregate update despite client dropout, and
2) the server removes the excessive part of the aggregate noise based on  the actual dropout outcome.
This scheme can be realized by two possible approaches:

\begin{figure}[t]
	\centering
	\begin{subfigure}[b]{0.485\columnwidth}
		\centering
		\includegraphics[width=\columnwidth]{./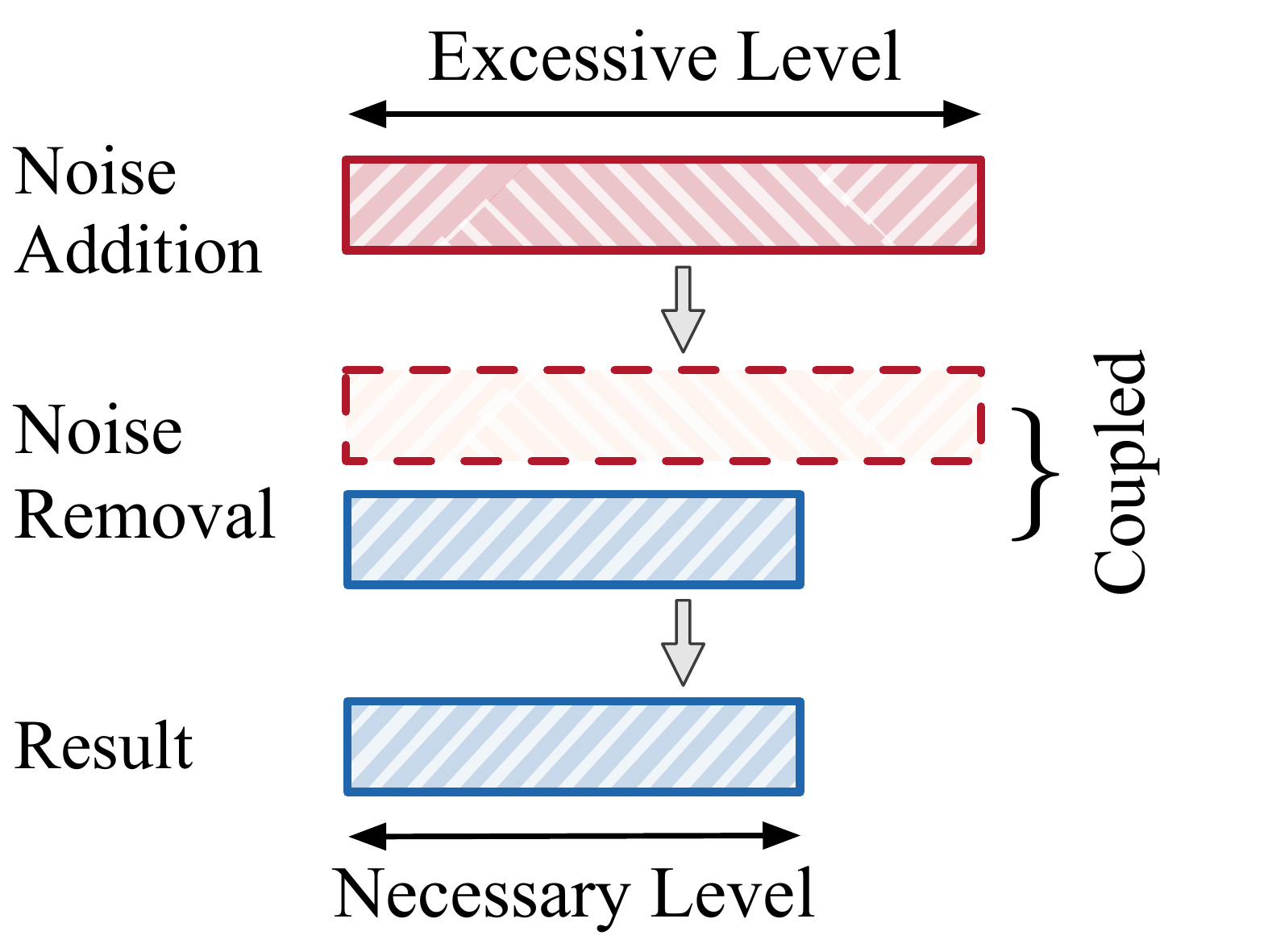}
		\caption{Rebasing.}
		\label{fig:enforcement_add_strawman}
	\end{subfigure} \hfill
	\begin{subfigure}[b]{0.495\columnwidth}
		\centering
		\includegraphics[width=\columnwidth]{./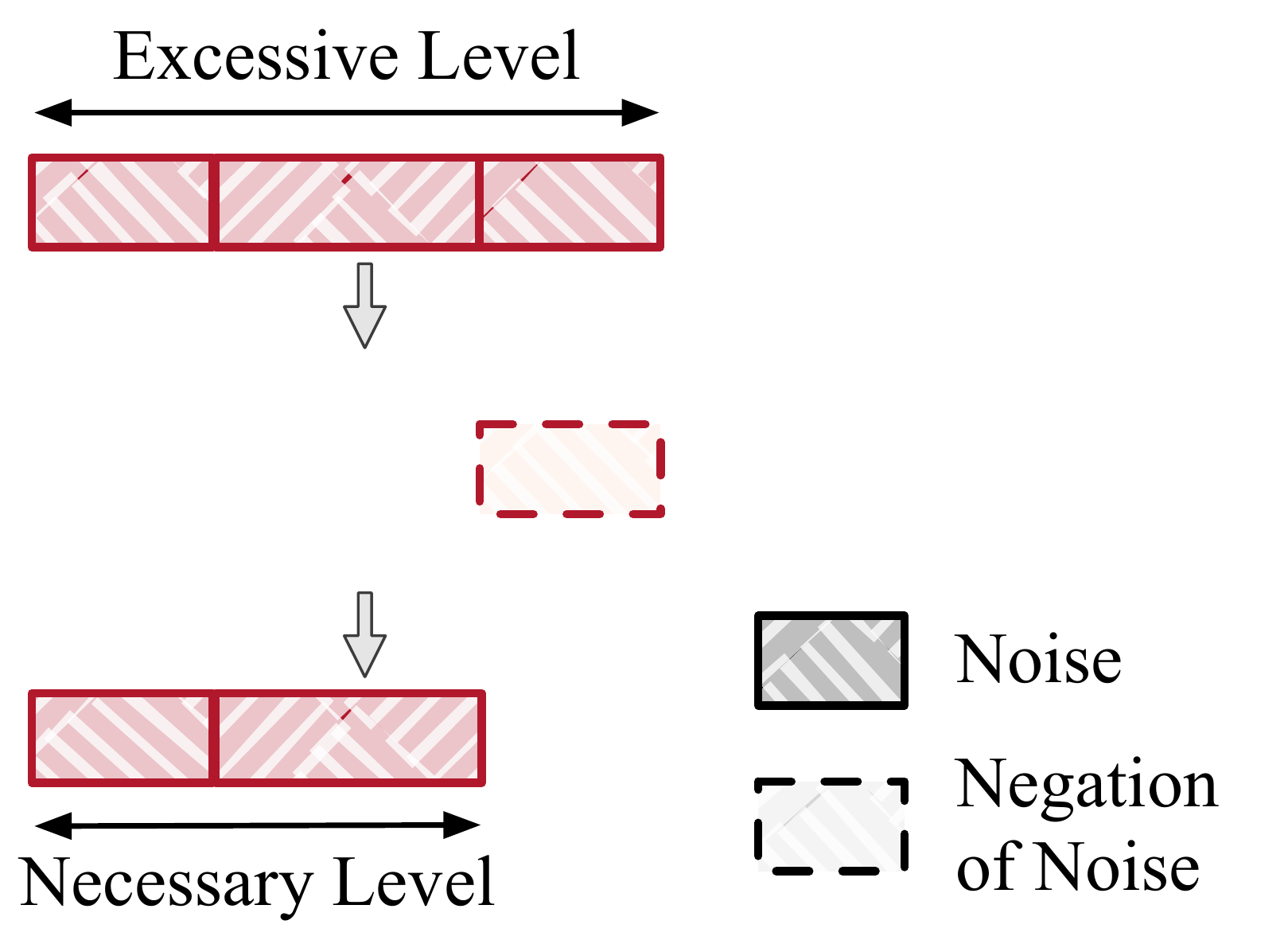}
		\caption{Decomposition.}
		\label{fig:enforcement_add_separate}
	\end{subfigure}
	\caption{Two `add-then-remove' approaches.}
	\label{fig:enforcement_add}
\end{figure}

\begin{figure*}[t]
	\captionsetup[subfigure]{position=b}
	\centering
	\setbox9=\hbox{\includegraphics[width=.36\linewidth]{./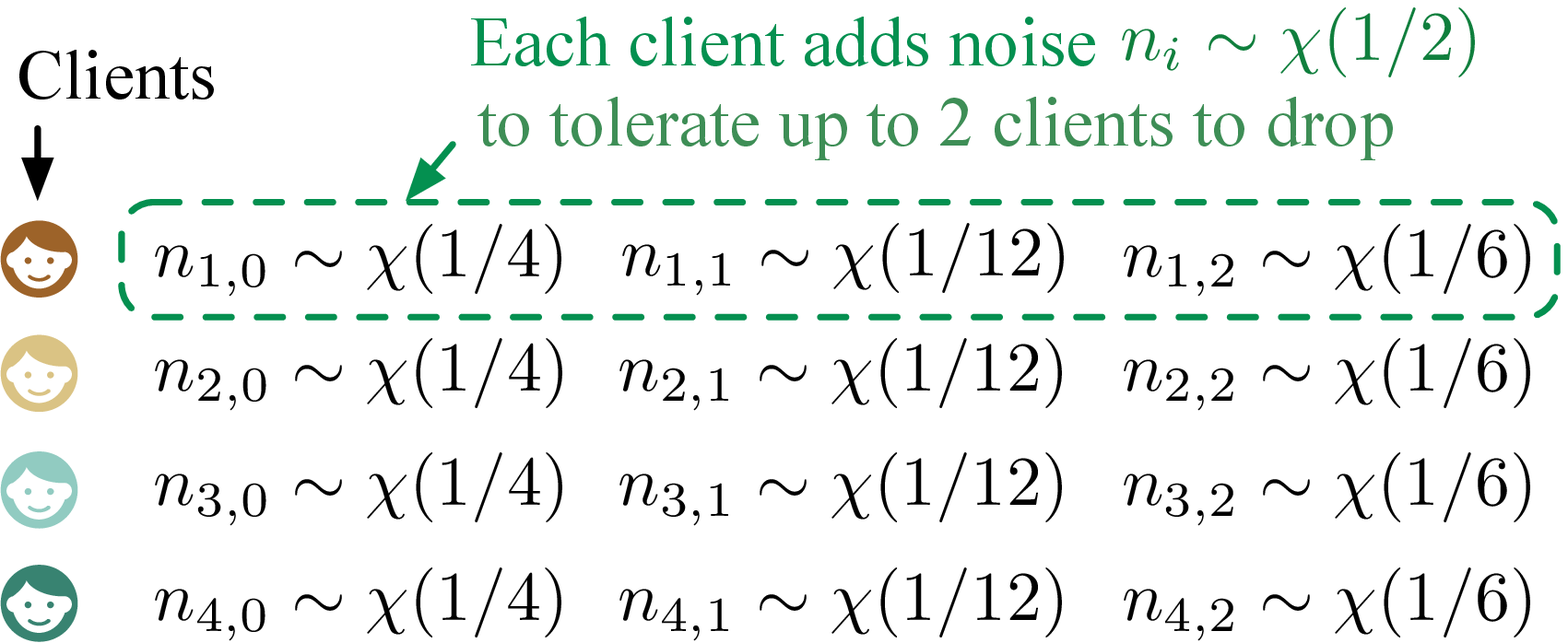}}
	\subcaptionbox{Noise decomposition.\label{fig:enforcement_xnoise_prec_add}}{\includegraphics[width=.35\linewidth]{./images/xnoise-prec-detailed.eps}}\hfill
	\subcaptionbox{No client drops.\label{fig:enforcement_xnoise_prec_0}}{\raisebox{\dimexpr\ht9-\height}{\includegraphics[width=.20\linewidth]{./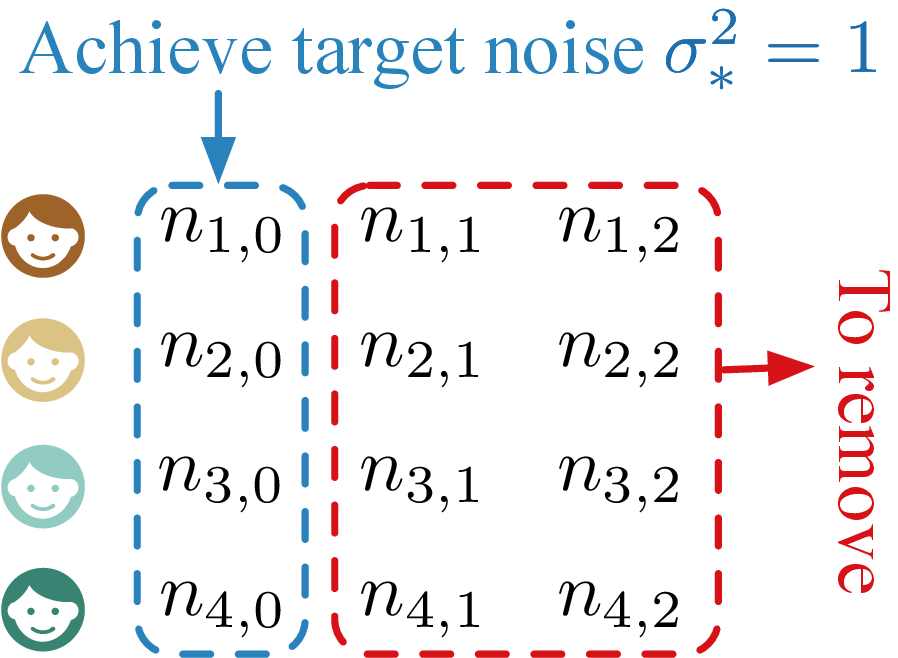}}}\hfill
	\subcaptionbox{One client drops.\label{fig:enforcement_xnoise_prec_1}}{\raisebox{\dimexpr\ht9-\height}{\includegraphics[width=.20\linewidth]{./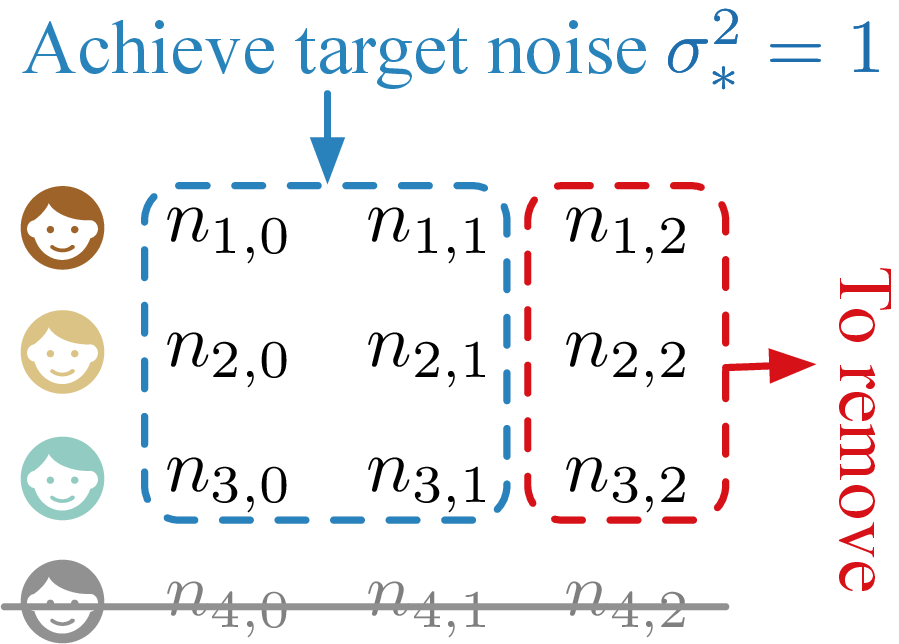}}}\hfill
	\subcaptionbox{Two clients drop.\label{fig:enforcement_xnoise_prec_2}}{\raisebox{\dimexpr\ht9-\height}{\includegraphics[width=.20\linewidth]{./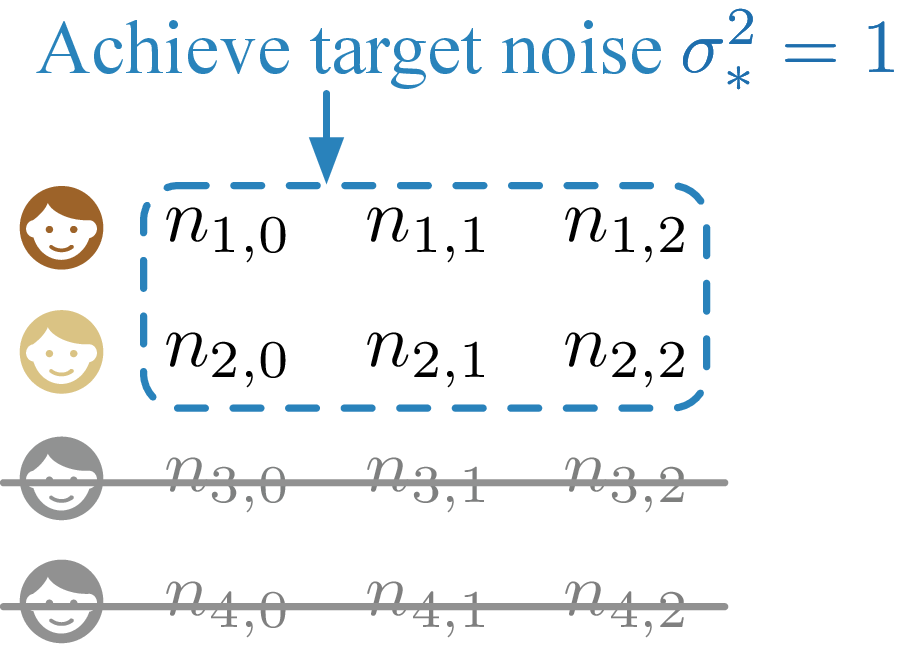}}}
	\caption{An illustration of how the `add-then-remove' approach with noise decomposition deals with client dropout precisely.}
	\label{fig:enforcement_xnoise_prec}
\end{figure*}

\begin{itemize}
    \item \emph{Rebasing}: During noise addition, each sampled client adds its noise share $\bm{n}_o$ to the local update, and sends the noisy update as a whole to the server.
    To facilitate noise removal, each surviving client computes the newly-required noise $\bm{n}_u$ based on the actual client dropout outcome, and subtracts the original noise share $\bm{n}_o$.
    To ensure that the noise removal is privacy-preserving, only `$\bm{n}_u - \bm{n}_o$' is transmitted to the server and added to the aggregate update (Figure~\ref{fig:enforcement_add_strawman}).
    This approach was adopted by~\cite{baek2021enhancing}.
    \item{\emph{Decomposition:} Instead of adding noise as a whole, each sampled client decomposes its noise share into multiple additive components that can be  added separately to the client's local update.
    For privacy-preserving noise removal, each surviving client only sends  the noise components that are over the newly-required amount to the server for subtracting them from the aggregate (Figure~\ref{fig:enforcement_add_separate}).}
\end{itemize}

\PHM{Comparison.}
One difference between the two approaches lies in their \emph{communication efficiency}.
In FL, a DP noise is a sequence of pseudo-random numbers (PRNs) of the same length (e.g., millions to billions) as the model,
and can be uniquely generated via feeding a seed (e.g., 20 bytes) into a PRN generator.
During noise removal, `decomposition' allows each surviving client to send the relevant seeds to the server for it to generate each noise component that needs to be removed.
However, `rebasing' requires each surviving client to generate and send the updated noise $\bm{n}_u - \bm{n}_o$ as a whole to conceal the two individual noises.
Otherwise, the server can use them to reconstruct the noise-free aggregated update. 
This results in poor efficiency as the communication cost of noise removal increases prohibitively with the ever-growing model size.
In Section~\ref{sec:evaluation_efficiency_noise}, we compare the scalability of the two approaches in communication.

Another difference lies in their \emph{robustness}.
In reality, surviving clients can also drop out in the middle of noise removal.
Missing their noise seeds, the server cannot fully remove the excessive noise added to the aggregate update.
To tackle this issue, `decomposition' can efficiently back up each noise component before model aggregation by secret-sharing its seed across clients  (e.g., via the Shamir's scheme~\cite{shamir1979share}).
Such a scheme, however, does not apply to `rebasing' as the updated noise to transmit can neither be generated with a seed nor determined before aggregation.

\PHM{Challenges.} Due to the poor efficiency and robustness of the `rebasing' approach, we opt to instantiate the `add-then-remove' noise enforcement scheme with `decomposition' and tackle two technical challenges in its design:
\begin{itemize}
	\item Given various dropout outcomes of a training round, how to decompose a client's noise share to accommodate every possible requirement during noise removal (\cref{sec:enforcement_add})?
	\item In real deployments, how to make the noise enforcement approach secure, preferably in an efficient way (\cref{sec:enforcement_consolidation})?
\end{itemize}

\subsection{Add-Then-Remove with Noise Decomposition}
\label{sec:enforcement_add}

We start with how much noise a sampled client should add.
Without loss of generality, we assume that
the system's \emph{tolerance} to client dropout is a configurable parameter.
Let $U$ be the set of sampled clients in a certain training round, among which the
system can tolerate up to $T$ dropouts.

\PHM{Noise Addition and Removal.}
Let $\sigma^2_*$ be the target noise
level in each round. To meet this noise level even in the worst case, 
each client in \system{} adds an excessive noise at the
level of $\frac {\sigma^2_*}{\lvert U \rvert - T}$. In doing so, even if there
are $T$ clients dropped after being sampled, the total noise contributed by
the surviving $\lvert U \rvert - T$ clients is still sufficient at the target level.

On the other hand, when fewer than $T$ clients dropped after being sampled, the aggregate noise exceeds the target level and part of it needs to be removed for model utility. Let $D \subset U$ denote the set of
clients dropped after being sampled, where $\lvert D \rvert
\le T$. The amount of excessive noise that should be removed by the server is:
\begin{equation}
    l_\mathrm{ex} = \underbrace{(\lvert U \rvert - \lvert D \rvert)\frac{\sigma^2_*}{\lvert U \rvert - T}}_{\textup{Actual noise level}} - \sigma^2_* = \frac{T - \lvert D \rvert}{\lvert U \rvert - T}\sigma^2_*.
    \label{eq:noise_to_drop}
\end{equation}

\system{} evenly distributes the noise removal task across surviving clients, therefore, each of them needs to help the server remove the noise of level:
\begin{equation}
    l'_\mathrm{ex}=\frac{l_\mathrm{ex}}{\lvert U \rvert - \lvert D \rvert} = \sigma^2_* \Bigl( \frac{1}{\lvert U \rvert - T} - \frac{1}{\lvert U \rvert - \lvert D \rvert} \Bigr).
    \label{eq:local_noise_to_drop}
\end{equation}

\PHM{Noise Decomposition for Precise Control.}
Equation~\eqref{eq:local_noise_to_drop} indicates that the noise to be removed by a surviving client decreases when the number of dropped clients increases.
Given this monotonicity, each client in \system{}
can carefully decompose its added noise share into multiple additive components, and remove some of such noise components when needed for the precise control of noise level.

For example, consider a scenario where the number of sampled clients $\lvert U \rvert = 4$, the dropout tolerance $T=2$, and the target noise level $\sigma_*^2=1$.
To enforce this target even if 2 clients drop out, the noise added to each client's update should be of the level 1/2.  Moreover, 
as shown in Figure~\ref{fig:enforcement_xnoise_prec_add}, such a noise can be added  as 3 separate components of level 1/4, 1/12, and 1/6, respectively, then one can accommodate all possible dropout outcomes within the tolerance by subtracting a subset of the added components to precisely remove the excessive noise.  To be exact, if no client drops, i.e., $\lvert D \rvert = 0$, each surviving client removes $l_{ex}'=1/12+1/6$ (Figure~\ref{fig:enforcement_xnoise_prec_0}); if one client drops, i.e.,  $\lvert D \rvert = 1$, each surviving client removes $l_{ex}'=1/6$  (Figure~\ref{fig:enforcement_xnoise_prec_1}); if two clients drop, i.e., $\lvert D \rvert = 2$, each surviving client removes $l_{ex}'=0$ (Figure~\ref{fig:enforcement_xnoise_prec_2}).

To be general, \system{} lets each client $c_i \in U$ decompose the added noise $\bm{n}_i \sim \chi(\frac{\sigma^2_*}{\lvert U \rvert - T})$ into $T + 1$ components, i.e., $\bm{n}_i = \sum_{k=0}^T \bm{n}_{i,k}$ where $\bm{n}_{i,0} \sim \chi(\frac{\sigma^2_*}{\lvert U \rvert})$ and $\bm{n}_{i,k} \sim \chi(\frac{\sigma^2_*}{(\lvert U \rvert - k + 1)(\lvert U \rvert - k)})$ for $k = 1, 2, \cdots, T$.
These noise components are constructed in a way that when there are $\lvert D \rvert$ clients dropped after being sampled, the noise components $\bm{n}_{i,k}$ contributed by the surviving clients $c_i \in U \setminus D$ with the index $k > \lvert D \rvert$ become excessive and should be removed.
One can verify that the aggregate of these removed components is exactly $l_\mathrm{ex}$, i.e., $\sum_{c_i \in U \setminus D}\sum_{k = \lvert D \rvert + 1}^T \bm{n}_{i, k} \sim \chi(l_\mathrm{ex})$.
We formalize the noise enforcement process for precise noise control as \textup{\texttt{XNoise}}.

\begin{definition}[\textup{\texttt{XNoise}}]
    In each training round, a sampled client $c_i \in U$ adds the intended
    excessive noise to its update $\bm{\Delta}_i$ and sends the perturbed result
    $\tilde{\bm{\Delta}}_i = \bm{\Delta}_i + \sum_{k=0}^T \bm{n}_{i, k}$ to the server.
    Among these sampled clients, a subset $D$ has dropped where $\lvert D \rvert \le T$. 
    The server calculates the aggregate update
    $\tilde{\bm{\Delta}} = \sum_{c_i \in U \setminus D} \tilde{\bm{\Delta}}_i$, and then
    removes some excessive noise components contributed by the surviving clients 
    (known as \emph{survivals}) to precisely enforce the target noise level, i.e.,
    $\tilde{\bm{\Delta}} - \sum_{c_i \in U \setminus D}\sum_{k = \lvert D \rvert + 1}^T \bm{n}_{i, k}$.
    \label{def:xnoise-prec}
\end{definition}

\PHM{Dropout-Resilient Noise Removal with Secret Sharing.}
As described in~\cref{sec:enforcement_intuition}, our noise decomposition design allows clients to transmit the seeds that are used to generate the requested noise components instead of those components in noise removal, greatly reducing the communication overhead.
To further make this process dropout-resilient,  we use Shamir's secret sharing scheme~\cite
{shamir1979share} for seed bookkeeping: each sampled client secretly shares with others the seeds it uses to generate local noise components before the secure aggregation takes place.

As such, to recover a local noise component during noise removal, the server first directly consults the related client on the corresponding seed.
If the client drops out before reporting the seed, the server then initiates one additional communication round to collect the secret shares of the seed from all available clients. The server can recover the seed provided that the number of responding clients in this communication round exceeds a certain threshold $\tau$ specified by the secret sharing scheme~\cite{shamir1979share}.

Given all the above, a faithful execution of \texttt{XNoise} strictly 
enforces the target noise level, as established by Theorem~\ref{thm:xnoise-prec}. The proof is
given in Appendix~\ref{sec:appendix_xnoise}.

\begin{theorem}[Correctness]
	\texttt{XNoise} ensures the noise level in the aggregate update is exactly $\sigma^2_*$, regardless of the dropout outcome as long as $\lvert D \rvert \leq T$, i.e., the number of dropped clients does not exceed the dropout tolerance.
	\label{thm:xnoise-prec}
\end{theorem}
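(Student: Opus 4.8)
The plan is to verify the claim by directly computing the variance of the noise that survives in the server's final aggregate and checking that it equals $\sigma^2_*$ for every admissible dropout count $d = \lvert D \rvert \in \{0, 1, \dots, T\}$. Writing $n = \lvert U \rvert$ for brevity, I would first pin down the residual per-client noise: since the server subtracts exactly the components $\bm{n}_{i,k}$ with index $k > d$, each survival $c_i \in U \setminus D$ retains precisely $\sum_{k=0}^{d} \bm{n}_{i,k}$, so the final aggregate noise is $\sum_{c_i \in U \setminus D} \sum_{k=0}^{d} \bm{n}_{i,k}$.

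Next I would invoke the closure-under-summation property assumed at the outset of this section. Because the components are drawn independently, closure lets me add variances, so each survival's residual satisfies $\sum_{k=0}^{d} \bm{n}_{i,k} \sim \chi(v_d)$ with
\[
v_d = \frac{\sigma^2_*}{n} + \sum_{k=1}^{d} \frac{\sigma^2_*}{(n-k+1)(n-k)}.
\]
The heart of the argument is to collapse this sum via the partial-fraction identity $\frac{1}{(n-k+1)(n-k)} = \frac{1}{n-k} - \frac{1}{n-k+1}$, which makes the sum telescope to $\frac{1}{n-d} - \frac{1}{n}$. Substituting back cancels the base term and yields $v_d = \sigma^2_* / (n-d)$. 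A final application of closure over the $n-d$ mutually independent survivals then gives total variance $(n-d)\cdot \frac{\sigma^2_*}{n-d} = \sigma^2_*$, which is independent of $d$ and hence of the particular dropout outcome, as required.

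I do not expect a deep obstacle, since the statement is essentially a verification; the one place warranting care is the index bookkeeping. I would confirm that removing the components with $k > \lvert D \rvert$ leaves exactly $k = 0, \dots, d$, and I would sanity-check the two extremes against the telescoping identity: at $d = 0$ only the base component $\bm{n}_{i,0} \sim \chi(\sigma^2_*/n)$ survives across all $n$ clients, while at $d = T$ nothing is removed and each of the $n-T$ survivals keeps its full $\chi(\sigma^2_*/(n-T))$ share. Both boundary cases aggregate to $\sigma^2_*$, confirming that the partial-fraction sum is being applied over the correct range of $k$.
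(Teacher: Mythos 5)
Your proposal is correct and follows essentially the same route as the paper's proof: both rest on the partial-fraction telescoping identity $\frac{1}{(n-k+1)(n-k)} = \frac{1}{n-k} - \frac{1}{n-k+1}$ together with closure under summation. The only difference is organizational — you compute the residual variance $\sigma^2_*/(n-d)$ per surviving client and multiply by $n-d$, whereas the paper separately telescopes the total added level $\sigma^2_*\frac{|U|-|D|}{|U|-T}$ and the total removed level $\sigma^2_*\frac{T-|D|}{|U|-T}$ and subtracts; both yield $\sigma^2_*$ identically.
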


\subsection{Security Consolidation with Optimized Practice}
\label{sec:enforcement_consolidation}

We next describe how we deploy \texttt{XNoise} in unsecure environments with optimized implementation.

\PHM{Establishment of Secure Channels across Clients.}
For a client to be able to share a secret with another client as desired in~\cref{sec:enforcement_add}, they first establish a secure channel.
To achieve this over a server-mediated network, \system{} has them conduct key agreement via the Diffie-Hellman protocol~\cite{merkle1978secure} to establish the shared secret key for encrypting the subsequent communication.
Furthermore, when a channel needs to be authenticated (under the malicious threat model), either end of the channel has to sign its messages for the other end to verify its identity with a public key infrastructure (PKI).


\PHM{Prevention from Understating Dropout.}
As characterized by Equation~\eqref{eq:noise_to_drop},
the less severe the client dropout is, the more noise the server removes.
While a semi-honest server faithfully runs the protocol and never reports a smaller number of dropped clients than the actual one,
a malicious server may \emph{understate} the dropout severity for removing more noise than needed and obtaining an insufficiently perturbed aggregate.
In the worst case, the malicious server can reduce the aggregate noise level to $(1 - T/  \lvert U \rvert)\sigma^*$, e.g., only $40\%$ of the target noise remains given the dropout tolerance set as $60\%$ of the sampled clients.
To detect whether a malicious server understates client dropout,  \system{} lets clients verify the broadcasted dropout outcome with the use of a PKI and a standard UF-CMA\footnote{Unforgability against Chosen-Message Attacks.} signature scheme \textbf{SIG}:
\begin{itemize}
\item Before uploading its perturbed local update $\tilde{\Delta}_i$, each client $i$ signs the current round number $R$ with its signing key $d^{SK}_i$ and produces a signature $\omega'_i \leftarrow \textrm{\textbf{SIG.sign}}(d^{SK}_i, R)$. $\omega'_i$  is sent along with $\tilde{\Delta}_i$ to the server.
\item When broadcasting the dropout outcome $D$, the server also broadcasts the set $\{j, \omega'_j\}_{j \in P}$, which contains all the signatures it has received ($P$ denoted as the related clients).
\item After receiving $D$ and $\{j, \omega'_j\}_{j \in P}$, each client $i$ verifies that: 1) all signatures are correct, i.e., $\textrm{\textbf{SIG.ver}}(d^{PK}_j, R, \omega'_j) = 1$ for all $j \in P$, and 2) they agree with the broadcasted dropout outcome, i.e., $P = U \setminus D$ (otherwise aborts).
\end{itemize}
Intuitively, for the server to pretend that a client $j$ survives, it has to forge that client's signature on the current round number, which is computationally infeasible if the client in fact dropped out, given the security of the signature scheme.

\PHM{Optimization via Integration with Secure Aggregation.} 
The aforementioned security-related secure channel establishment and the dropout outcome verification both induce $O(\lvert U \rvert)$ cost to each sampled client in computation and communication, and $O(\lvert U \rvert^2)$ cost to the server in communication.
On the other hand, secure aggregation, the other indispensable component in the distributed DP workflow, often has instantiated similar primitives for correctness and security of its execution~\cite{bonawitz2017practical, bell2020secure}.
We thus repurpose the existing security infrastructure to reduce the implementation complexity and improve the runtime efficiency.

\begin{table*}
	\footnotesize
	\begin{tabularx}{\linewidth}{X}
		\toprule
		\\
		\centerline{The SecAgg Protocol integrated with \texttt{XNoise}} \\
		\vspace*{-0.2in}
		\begin{itemize}
			\item \textbf{Setup}:
			\begin{itemize}
				\item[--] All parties are given the current round index $r$, \reuse{the security parameter $\eta$, the number of users $\lvert U \rvert$ and a threshold for SecAgg $t$, honestly generated $pp \leftarrow \textrm{\textbf{KA.gen}}(\eta)$, parameters $m$ and $R$ such that $\mathbb{Z}^m_R$ is the space from which inputs are sampled, and a field $\mathbb{F}$ to be used for secret sharing }\xnoise{and noise samping}, \xnoise{a noise distribution $\chi$, the target central noise level $\sigma_*^2$, a dropout tolerance $T$ and a collusion tolerance $T_C$ for \texttt{XNoise}}. \reuse{All users also have a private authenticated channel with the server.}
			\end{itemize}
			\(
			\textrm{\textit{User} \ } u \left\{
			\begin{minipage}[c]{0.9\linewidth}
				\begin{itemize}
					\item[--] \xnoise{Have an input vector $\bm{\Delta}_u$,
						noises $\bm{n}_{u, 0} \stackrel{g_{u, 0}}{\sim} \chi \left(\frac{\sigma_*^2}{\lvert U \rvert} \cdot \frac{t}{t - T_C} \right)$,
						$\bm{n}_{u, k} \stackrel{g_{u, k}}{\sim} \chi \left(\frac{\sigma_*^2}{(\lvert U \rvert - k + 1) (\lvert U \rvert - k)} \cdot \frac{t}{t - T_C} \right)$ ($k\in[1, T]$) where $g_{u, k} \leftarrow \mathbb{F}$ for all $k$.}
					\item[--] \xnoise{Add to $\bm{\Delta}_u$  a series of noises and produce $\tilde{\bm{\Delta}}_u$: $\tilde{\bm{\Delta}}_u = \bm{\Delta}_u + \sum_{0 \leq k \leq T} \bm{n}_{u, k}$.} 
					\item[--] \malicious{\reuse{Receive signing key $d^{SK}_u$ from the trusted third party, together with verification keys $d^{PK}_v$ bound to each identity $v$.}}
				\end{itemize}
			\end{minipage}
			\right.
			\)
			\item \textbf{Stage 0 (AdvertiseKeys)}: \newline
			\(
			\textrm{\textit{User} \ } u \left\{
			\begin{minipage}[c]{0.9\linewidth}
				\begin{itemize}
					\item[--] Generate key pairs \reuse{$(c^{PK}_u, c^{SK}_u) \leftarrow \textrm{\textbf{KA.gen}}(pp)$}, $(s^{PK}_u, s^{SK}_u) \leftarrow \textrm{\textbf{KA.gen}}(pp)$, \malicious{\reuse{and generate $\omega_u \leftarrow \textrm{\textbf{SIG.sign}}(d_u^{SK}, c_u^{PK} \Vert s_u^{PK}) = 1$}}.
					\item[--] Send $($\reuse{$c_u^{PK}$}$\Vert s_u^{PK}$\malicious{\reuse{$\Vert \omega_u$}}$)$ to the server (through the private authenticated channel) and move to next round.
				\end{itemize}
			\end{minipage}
			\right.
			\)
			\(
			\textrm{\textit{Server} \ } \left\{
			\begin{minipage}[c]{0.9\linewidth}
				\begin{itemize}
					\item[--] \reuse{Collect at least $t$ messages from individual users in the previous round (denote with $U_1$ this set of users). Otherwise, abort.}
					\item[--] Broadcast to all users in $U_1$ the list $\{($\reuse{$v, c_v^{PK}$}$, s_v^{PK}$\malicious{\reuse{$, \omega_v$}}$\}_{v \in U_1}$.
				\end{itemize}
			\end{minipage}
			\right.
			\)
			\item \textbf{Stage 1 (ShareKeys)}: \newline
			\(
			\textrm{\textit{User} \ } u \left\{
			\begin{minipage}[c]{0.9\linewidth}
				\begin{itemize}
					\item[--] Receive the list $\{($\reuse{$v, c_v^{PK}$}$, s_v^{PK}$\malicious{\reuse{$, \omega_v$}}$\}_{v \in U_1}$ broadcasted by the server. \reuse{Assert that $\lvert U_1 \rvert > t$, that all the public keys pairs are different}, \malicious{\reuse{and that $\forall v \in U_1$, $\textrm{\textbf{SIG.ver}}(d_v^{PK}, c_v^{PK} \Vert s_v^{PK}, \omega_v) = 1$}}.
					\item[--] Sample a random element $b_u \leftarrow \mathbb{F}$ (to be used as a seed for a $\textrm{\textbf{PRG}}$).
					\item[--] Generate $t$-out-of-$\lvert U_1 \rvert$ shares of $s_u^{SK}$: $\{(v, s_{u, v}^{SK})\}_{v \in U_1} \leftarrow \textrm{\textbf{SS.share}}(s_u^{SK}, t, U_1)$ and of $b_u$: $\{(v, b_{u, v})\}_{v \in U_1} \leftarrow \textrm{\textbf{SS.share}}(b_u, t, U_1)$.
					\item[--] \xnoise{Generate $t$-out-of-$\lvert U_1 \rvert$ shares for each of $g_{u, k}$ where $k > 0$: $\{(v, g_{u, k, v})\}_{v \in U_1} \leftarrow \textrm{\textbf{SS.share}}(g_{u, k}, t, U_1)$ for $k = [1, T]$.}
					\item[--]For each other user $v \in U_1 \setminus \{u\}$, compute \reuse{$e_{u, v} \leftarrow \textrm{\textbf{AE.enc}}( $ $\textrm{\textbf{KA.agree}}(c_u^{SK}, c_v^{PK})$}$, u \Vert v \Vert s_{u, v}^{SK} \Vert b_{u,v}$\xnoise{$\Vert \{ g_{u,  k, v}\}_{1\leq k \leq T}$}$)$.
					\item[--] \reuse{If any of the above operations (assertion, signature verification, key agreement, encryption) fails, abort.}
					\item[--] \reuse{Send all the ciphertexts $e_{u, v}$ to the server (each implicitly containing addressing information, $u$, $v$ as metadata)}.
				\end{itemize}
			\end{minipage}
			\right.
			\)
			\(
			\textrm{\textit{Server} \ } \left\{
			\begin{minipage}[c]{0.9\linewidth}
				\begin{itemize}
					\item[--] \reuse{Collect lists of ciphertexts from at least $t$ users (denote with $U_2 \subseteq U_1$ this set of users)}.
					\item[--] \reuse{Sends to each user $u \in U_2$ all ciphertexts encrypted for it: $\{e_{u,v}\}_{v \in U_2}$}.
				\end{itemize}
			\end{minipage}
			\right.
			\)
			\item \textbf{Stage 2 (MaskedInputCollection)}: \newline
			\(
			\textrm{\textit{User} \ } u \left\{
			\begin{minipage}[c]{0.9\linewidth}
				\begin{itemize}
					\item[--] \reuse{Receive (and store) from the server the list of ciphertexts $\{e_{u, v}\}_{v \in U_2}$ (and infer the set $U_2$). If the list is of size $<t$, abort.}
					\item[--] For each other user $v \in U_2 \setminus \{u\}$, compute $s_{u, v} \leftarrow \textrm{\textbf{KA.agree}}(s_u^{SK}, s_v^{PK})$ and expand this value using a $\mathrm{\mathbf{PRG}}$ into a random vector $\bm{p}_{u, v} = \gamma_{u, v} \cdot \textrm{\textbf{PRG}}(s_{u,v})$, where $ \gamma_{u, v} = 1$ when $u > v$, and $\gamma_{u, v} = -1$ when $u<v$ (note that $\bm{p}_{u, v} + \bm{p}_{v, u} = 0 \forall u \neq v$). Define $\bm{p}_{u, u} = 0$.
					\item[--] Compute the user's own private mask vector $\bm{p}_u = \textrm{\textbf{PRG}}(b_u)$ and the masked perturbed input $\bm{y}_u \leftarrow \tilde{\bm{\Delta}}_u + \bm{p}_u + \sum_{v \in U_2} \bm{p}_{u, v} \pmod{R}$.
					\item[--] If any of the above operations (key agreement, PRG) fails, abort. Otherwise, send $\bm{y}_u$ to the server and move to the next round.
				\end{itemize}
			\end{minipage}
			\right.
			\)
			\textit{Server}: Collect $\bm{y}_u$ from at least $t$ users (denote with $U_3 \subseteq U_2$ this set of users). Send to each user in $U_3$ the list $U_3$.
			\item \malicious{\reuse{\textbf{Stage 3 (ConsistencyCheck)}}}: \newline
			\textit{User} $u$: \malicious{\reuse{Receive from the server a list $U_3 \subseteq U_2$ containing at least $t$ users ($u$ included).
					Abort if $\lvert U_3 \rvert < t$.
					Send to the server $\omega'_u \leftarrow \textrm{\textbf{SIG.sign}}(d_u^{SK}, $}$r \Vert$\reuse{$ U_3)$.}} \newline
			\textit{Server}: \malicious{\reuse{Collect $\omega'_u$ from at least $t$ users (denote with $U_4 \subseteq U_3$ this set of users). Send to each user in $U_4$ the set $\{v, \omega'_v\}_{v \in U_4}$.}}
			\item \textbf{Stage 4 (Unmasking)}: \newline
			\(
			\textrm{\textit{User} \ } u \left\{
			\begin{minipage}[c]{0.9\linewidth}
				\begin{itemize}
					\item[--] \reuse{Receive from the server a list $\{v$}\malicious{\reuse{$, \omega'_v$}}\reuse{$\}_{v \in U_4}$.  Verify that $U_4 \subseteq U_3$, that $\lvert U_4 \rvert \geq t$} \malicious{\reuse{, that $\textrm{\textbf{SIG.ver}}(d^{PK}_v, $}$r \Vert$\reuse{$ U_3, \omega'_v) = 1$ for all $v \in U_4$, (otherwise abort).}}
					\item[--] For each other user $v \in U_2 \subseteq \{u\}$, decrypt the ciphertext $v' \Vert u' \Vert s_{v, u}^{SK} \Vert b_{v, u}$\xnoise{$\Vert \{g_{v, k, u}\}_{1 \leq k \leq T}$}$\leftarrow \textrm{\textbf{AE.dec}}( $\reuse{$\textrm{\textbf{KA.agree}}(c_u^{SK}, c_v^{PK}), e_{v, u})$} received in the \textbf{MaskedInputCollection} round and \reuse{assert that $u = u'  \wedge v = v'$}.
					\item[--] \reuse{If any of the decryption operations fail (in particular, the ciphertext does not correctly authenticate), abort.}
					\item[--] Send a list of shares to the server: $s_{v, u}^{SK}$ for users $v \in U_2 \setminus U_3$ and $b_{v, u}$ in $v \in U_3$; \xnoise{and a list of seeds $g_{u, k}$ for $\lvert U \setminus U_3 \rvert + 1 \leq k \leq T$}.
				\end{itemize}
			\end{minipage}
			\right.
			\)
			\(
			\textrm{\textit{Server} \ } \left\{
			\begin{minipage}[c]{0.9\linewidth}
				\begin{itemize}
					\item[--] \reuse{Collect responses from at least $t$ users (denote with $U_5 \subseteq U_4$ this set of users).}
					\item[--] For each user $u \in U_2 \setminus U_3$, reconstruct $s_u^{SK} \leftarrow \textrm{\textbf{SS.recon}}(\{s_{u, v}^{SK}\}_{v\in U_5}, t)$ and use it (with the public keys received in the \textbf{AdvertiseKeys} round) to recompute $\bm{p}_{u, v}$ for all $v \in U_3$ using the PRG.
					For each user $u \in U_3$, reconstruct $b_u \leftarrow  \textrm{\textbf{SS.recon}}(\{b_{u, v}\}_{v\in U_5}, t)$ to recompute $\bm{p}_u$ using the PRG.
					\item[--] Compute $\bm{z} = \sum_{u\in U_3} \tilde{\bm{\Delta}}_u$ as $\sum_{u\in U_3} \tilde{\bm{\Delta}}_u = \sum_{u\in U_3} \bm{y_u} - \sum_{u\in U_3} \bm{p_u} + \sum_{u \in U_3, v \in U_2 \setminus U_3} \bm{p}_{v, u}$.
					\item[--] \xnoise{Sends to each user $u \in U_5$ the set $U_5$, if $U_3 \setminus U_5 \neq \emptyset$.}
				\end{itemize}
			\end{minipage}
			\right.
			\)
			\item \xnoise{\textbf{Stage 5 (ExcessiveNoiseRemoval)}}: \newline
			\(
			\textrm{\textit{User} \ } u \left\{
			\begin{minipage}[c]{0.9\linewidth}
				\begin{itemize}
					\item[--] \xnoise{Receive from the server the set $U_5$. Verify that $U_5 \subseteq U_4$ and that $U_5 \geq t$ (otherwise abort).}
					\item[--] \xnoise{Send a list of shares to the server, which consists of $g_{v, k, u}$ for users $v \in U_3 \setminus U_5$ and $\lvert U \setminus U_3 \rvert + 1 \leq k \leq T$.}
				\end{itemize}
			\end{minipage}
			\right.
			\)
			\(
			\textrm{\textit{Server} \ } \left\{
			\begin{minipage}[c]{0.9\linewidth}
				\begin{itemize}
					\item[--] \xnoise{Collect responses from at least $t$ users (denote with $U_6 \subseteq U_5$ this set of users).}
					\item[--] \xnoise{For each user $u \in U_3 \setminus U_5$, reconstruct $g_{u, k} \leftarrow  \textrm{\textbf{SS.recon}}(\{g_{u, k, v}\}_{v\in U_6}, t)$ for $\lvert U \setminus U_3 \rvert + 1 \leq k \leq T$.}
					\item[--] \xnoise{Generate random noises $\bm{n}_{u, k} \stackrel{g_{u, k}}{\sim} \chi \left( \frac{\sigma_*^2}{(\lvert U \rvert - k + 1)(\lvert U \rvert - k )} \cdot \frac{t}{t - T_C} \right)$ for $u \in U_3$ and $\lvert U \setminus U_3 \rvert + 1 \leq k \leq T$ and subtracts them from $\bm{z}$.}
				\end{itemize}
			\end{minipage}
			\right.
			\)
		\end{itemize} \\
		\bottomrule
		\vspace*{-0.2in}
		\captionof{figure}{
			Detailed description of the SecAgg protocol~\cite{bonawitz2017practical} integrated with \texttt{XNoise} (\cref{sec:enforcement_add}).
			\malicious{Italicized parts inside square brackets are required to guarantee security in the malicious threat model.}
			\xnoise{Red, underlined parts are specific for \texttt{XNoise}.}
			\reuse{Green, highlighted parts are secure results of SecAgg reused by \texttt{XNoise}.}
			The symbol $\Vert$ denotes concatenation.
		}
		\label{fig:enforcement_integration}
	\end{tabularx}
\end{table*}

To exemplify, Figure~\ref{fig:enforcement_integration} details how we integrate \texttt{XNoise} with SecAgg~\cite{bonawitz2017practical} for reusing the secure channels across clients and correct broadcast on dropout outcome.
SecAgg is instantiated with a public key infrastructure (PKI), the Diffie-Hellman key agreement ~\cite{merkle1978secure} \textbf{KA} protocol composed with a secure hash function, the Shamir's $t$-out-of-$n$ secret sharing scheme~\cite{shamir1979share} \textbf{SS}, an IND-CPA (Indistinguishability against Chosen-Plaintext Attacks) and INT-CTXT (Integrity of Ciphertext) authenticated encryption scheme \textbf{AE}, a UF-CMA signature scheme \textbf{SIG}, and a secure pseudorandom generator \textbf{PRG}.
For a detailed explanation of the cryptographic primitives employed by SecAgg, we refer the reader to the original paper~\cite{bonawitz2017practical}.
While we opted to implement \texttt{XNoise} by resuing SecAgg's infrastructure for improved complexity and efficiency,  we emphasize that \texttt{XNoise} is self-contained and complementary to secure aggregation protocols.

\PHM{Handling Mild Collusion.}
The server can collude with a subset of clients  under semi-honest and malicious threat models.
In its strongest form, they collude from the very beginning of the protocol and pool their views all the time.
As the collusion scale is presumably mild (justified in~\cref{sec:background_threat}), honest clients can use a slightly increased amount of local noise to handle such collusion without utility loss.
For example, given a collusion tolerance $T_C \approx 0.01\lvert U \rvert$, instead of adding noise of level $\frac{\sigma_*^2}{\lvert U \rvert - T}$, each honest client adds noise of level $\frac{\sigma_*^2}{\lvert U \rvert - T} \cdot \frac{t}{t - T_C}$ where $t$ is the threshold as used in SecAgg.
In doing so, a collusion within the tolerance $T_C$ will not yield an insufficiently perturbed aggregate (\cref{sec:enforcement_security}).

It is important to mention that in the malicious setting with mild collusion, \system{} no longer enforces the minimum necessary noise but instead introduces a noise inflation factor of $\frac{t}{t - T_C}$.
Fortunately, given that $t$ is intentionally much larger than $T_C$,\footnote{The feasible range of $t$ is $(0.5 \lvert U \rvert, \lvert U \rvert]$ in the malicious setting~\cite{bonawitz2017practical}.}, the inflation factor is only slightly greater than $1$.
It should be noted, however, that this approach alone is insufficient to address the privacy leakage caused by dropout without any loss in utility.
We anticipate that dropout could be on a much larger scale than collusion, potentially reaching the same magnitude as $t$ (\cref{sec:issue_client_dropout}).

\subsection{Security Analysis}
\label{sec:enforcement_security}

We consider the strongest adversary in our threat model (\cref{sec:background_threat}), i.e., a malicious server colluding with a subset of sampled clients, as it subsumes weaker adversaries.
By being malicious, we mean arbitrary deviation from the protocol, e.g., sending incorrect and/or chosen messages to honest clients, aborting, or omitting messages.
In this case, we aim to provide the target level of differential privacy for honest clients (i.e., the adversary never sees an insufficiently perturbed update).
We regard the protocol illustrated in Figure~\ref{fig:enforcement_integration} (denoted by $\pi$) as the target implementation of \texttt{XNoise} without loss of generality.

Let $\eta$ be security parameter;
$t$ be the threshold of SecAgg;
$T$ and $T_C$ be the tolerated number of clients dropping out and colluding in \texttt{XNoise}, respectively;
$U$ be the set of sampled clients;
$C \subset U \cup \{S\}$ be the set of colluding parties (where $S$ is the server);
$\bm{\Delta}_{U'} = \{\bm{\Delta}_u\}_{u \in U'}$ and $g_{U'} = \{g_{u, k}\}_{u \in U', k \in [0, T]}$ be the input vectors and sampling seeds used in noise addition of any subset of users $U' \subseteq U$, respectively;
$\sigma_*^2$ be the target level of aggregate noise for each round.
Theorem~\ref{thm:security} shows that a computationally bounded adversary cannot recover an insufficiently perturbed aggregate during the execution of $\pi$.
We give the proof in Appendix~\ref{sec:appendix_security}.

\begin{theorem}[Privacy against Malicious Adversaries]
For all $\eta$, $t$, $U$, $T$, $T_C$, $C \subseteq U \cup \{S\}$, $\bm{\Delta}_{U \setminus C}$ and $g_{U \setminus C}$. If $2t > \lvert U \rvert + \lvert C \cap U \lvert$ and $T_C \geq \lvert C \cap U \rvert$, probabilistic polynomial-time (PPT) adversary, given its view of an execution of $\pi$, cannot recover an aggregate update perturbed with noise less than the target level $\sigma_*^2$ with non-negligible probability.
\label{thm:security}
\end{theorem}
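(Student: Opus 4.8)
The plan is to combine a standard cryptographic hybrid argument, which confines what a malicious adversary can possibly strip from the aggregate, with a self-contained noise-accounting inequality showing that what necessarily remains is at least $\sigma_*^2$. Concretely, I would first argue that the adversary can remove at most the \emph{excessive} noise components the protocol is designed to remove; I would then show that the retained noise of the honest survivors alone, after discarding every colluding client's contribution, already meets the target level. Throughout I use that the regime is well-formed, i.e. $\lvert C \cap U \rvert \le T_C < t$, so the inflation factor $\tfrac{t}{t-T_C}$ is positive.

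First I would set up a sequence of hybrids transforming a real execution of $\pi$ into an idealized one in which $\textbf{SIG}$, $\textbf{SS}$, $\textbf{AE}$, $\textbf{KA}$, and $\textbf{PRG}$ are replaced by ideal counterparts, reusing the SecAgg malicious-security simulation of Bonawitz et al.~\cite{bonawitz2017practical} as a black box for the masking and unmasking stages. The threshold hypothesis $2t > \lvert U \rvert + \lvert C \cap U \rvert$ is exactly the condition under which that proof guarantees the corrupt parties' joint view is simulatable from the honest aggregate alone; in particular it forbids reconstructing both an honest client's pairwise key $s_u^{SK}$ and its self-mask seed $b_u$, so masks can be peeled only to recover the sum over $U_3$ and never an individual honest input from which noise could be isolated. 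In the idealized world the only quantities under adversarial control reduce to (i) the committed surviving set $U_3$ and (ii) the seeds and shares honest clients are instructed to reveal.

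Next I would rule out the \emph{understate-dropout} deviation and the stripping of retained noise. By UF-CMA security of $\textbf{SIG}$ and the Stage~3 consistency check, the adversary cannot place in $U_3$ any honest client that did not genuinely survive (it would have to forge a signature on $r \Vert U_3$), and signature verification forces all honest clients onto a single $U_3$. Hence each honest survivor $u \in U_3$ reveals seeds and shares only for the genuinely excessive indices $k > \lvert U \setminus U_3 \rvert$, and never reveals $g_{u,0}$ at all. The retained seeds $g_{u,k}$ with $1 \le k \le \lvert U \setminus U_3 \rvert$ are $t$-out-of-$\lvert U_1 \rvert$ shared~\cite{shamir1979share}, and since honest clients hand out shares of them only for dropped or excessive indices, the adversary holds at most $\lvert C \cap U \rvert \le T_C < t$ shares of each; by the privacy of $\textbf{SS}$ every retained component of every honest survivor stays hidden, so no PPT adversary can subtract it from $\bm{z}$ except with negligible probability (this bound holds even if the server inflates $U_3$ with fabricated colluding entries, since the accounting below uses the committed $U_3$).

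Finally I would close with the accounting inequality. By the telescoping identity already used for Theorem~\ref{thm:xnoise-prec}, each survivor's retained components sum to variance $\tfrac{\sigma_*^2}{\lvert U_3 \rvert}\cdot\tfrac{t}{t-T_C}$. Among the $\lvert U_3 \rvert \ge t$ survivors at most $\lvert C \cap U \rvert \le T_C$ are colluding, so at least $\lvert U_3 \rvert - T_C$ are honest and contribute independent noise unknown to the adversary, giving residual aggregate noise of level
\[
(\lvert U_3 \rvert - T_C)\cdot\frac{\sigma_*^2}{\lvert U_3 \rvert}\cdot\frac{t}{t-T_C} = \frac{\lvert U_3 \rvert - T_C}{\lvert U_3 \rvert}\cdot\frac{t}{t-T_C}\,\sigma_*^2 \ge \sigma_*^2,
\]
where the last step follows from $\lvert U_3 \rvert \ge t$, equivalently $(\lvert U_3 \rvert - T_C)\,t \ge \lvert U_3 \rvert\,(t - T_C)$. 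Combining the negligible distinguishing advantage from the hybrids with this deterministic lower bound yields the claim. I expect the principal obstacle to be the hybrid step of the second paragraph: carefully arguing that a malicious server's inconsistent, malformed, or aborted messages cannot help it beyond the idealized interface, and that the threshold condition genuinely prevents double-reconstruction of honest secrets so that masks are peeled only to reveal the aggregate; the accounting is then routine.
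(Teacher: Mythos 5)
Your proposal follows essentially the same route as the paper: a hybrid argument that reduces to SecAgg's malicious-security simulation and confines the adversary's knowledge to the perturbed aggregate plus the legitimately revealed seeds (the paper packages this as a separate lemma with an explicit ideal functionality whose seed-index threshold carries a $-\lvert C \cap U \rvert$ slack to absorb padding of the survivor set by colluders), followed by the same telescoping noise accounting, where your final inequality $\frac{\lvert U_3 \rvert - T_C}{\lvert U_3 \rvert}\cdot\frac{t}{t-T_C}\ge 1$ via $\lvert U_3 \rvert \ge t$ is algebraically equivalent to the paper's bound $\frac{\lvert L \rvert}{\lvert L \rvert + \lvert C \cap U \rvert}\cdot\frac{t}{t-T_C}\ge\frac{t-\lvert C \cap U\rvert}{t-T_C}\ge 1$ with $L = U_3\setminus C$. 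One minor imprecision worth fixing: in the integrated protocol the Stage-3 signatures are produced by members of $U_4$ on the set $U_3$ as a whole, so they enforce agreement on $U_3$ rather than certifying each listed client's own survival, and the server can in principle pad $U_3$ --- but, as you note parenthetically, your accounting is carried out relative to the committed $U_3$ and survives such padding, which is precisely what the paper's ideal-functionality slack is for.
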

\section{Optimal Pipeline Acceleration}
\label{sec:pipeline}

As described in \cref{sec:secagg_issue}, secure aggregation protocols used for distributed DP in FL create a severe performance bottleneck in the round latency.
Additionally, integrating our dropout-resilient noise enforcement scheme may further exacerbate its inefficiency.
To address this performance issue, we target system-level solutions that preserve all the existing merits of a specific secure aggregation protocol.

\PHM{Technical Intuition.}
We first identified three types of operations in distributed DP that use different
system resources: 1) \texttt{s-comp} that
uses the compute resources (e.g., CPU and memory) of the server,
2) \texttt{c-comp} that uses the compute resources of clients, and
3) \texttt{comm} that relies on server-client communications.
As observed in FL practice, plain execution of distributed DP leads to low utilization of these resources over time: \texttt{s-comp}, \texttt{c-comp}, and \texttt{comm} can be idle for up to 53\%, 63\%, and 93\% of the round time, respectively.
This indicates that pipelining, which enables overlapping resource usage, is viable for improving the utilization.
However, designing and automating pipelined execution for distributed DP in FL leads to two novel system challenges:
\begin{itemize}
	\item Given the large variety in secure aggregation protocols, how to represent their workflows to facilitate generic solutions to the pipelined execution problem (\cref{sec:pipeline_staging})?
	\item {Given the complexity of FL environments, how to correctly model them for generating an optimal pipelining plan for maximum acceleration (\cref{sec:pipeline_determining})?}
\end{itemize}

\subsection{Staging Workflow for Pipelined Execution}
\label{sec:pipeline_staging}

To allow for a generic solution, we first abstract away the specifics of secure aggregation protocols.

\PHM{Abstracting Workflow for Generality.} Unlike traditional ML workflows, which involve a computation graph of neural network with general consensus established on their representation~\cite{abadi2016tensorflow, jia2019taso}, secure aggregation protocols combine both computation and communication, with no standardized approach.
Nonetheless, we notice that secure aggregation protocols are often designed as multi-round server-client interactions.
We thus propose to represent the workflow of a secure aggregation protocol as a sequence of `round-trip \emph{steps}', each of which starts with the server's request for some data and ends with the related clients' responses.
Furthermore, we associate each step with its dominant system resource and group consecutive steps that use the same resource into a \emph{stage} which corresponds to the minimum scheduling unit in pipelining.
For example, Table~\ref{tab:pipeline_stage} illustrates our representation of a distributed DP protocol using SecAgg~\cite{bonawitz2017practical}, where all 11 steps are grouped into 5 stages.

\begin{table}[t]
	\centering
	\caption{Abstracting the workflow of dropout-resilient distributed DP into multiple stages for pipelined execution.}
	\label{tab:pipeline_stage}
	\resizebox{0.9\columnwidth}{!}{%
		\begin{tabular}{clc}
			\toprule
			Step & Operation & Stage (Resource) \\
			\midrule
			1 & Clients encode updates. & \multirow{4}{*}{1 (\texttt{c-comp})} \\ 
			2 & Clients generate security keys. &  \\
			3 & Clients establish shared secrets. &  \\
			4 & Clients mask encoded updates. & \\ \cmidrule{3-3} 
			5 & Clients upload masked updates. & 2 (\texttt{comm}) \\ \cmidrule{3-3}
			6 & Server deals with dropout. & \multirow{3}{*}{3 (\texttt{s-comp})} \\
			7 & Server computes aggregate update. &  \\
			8 & Server updates the global model. &  \\ \cmidrule{3-3}
			9 & Server dispatches the aggregate. & 4 (\texttt{comm}) \\ \cmidrule{3-3}
			10 & Clients decode the aggregate. & \multirow{2}{*}{5 (\texttt{c-comp})} \\
			11 & Clients use the aggregate. & \\
			\bottomrule
		\end{tabular}
	}%
\end{table}

\PHM{Pipelining via Task Partitioning.}
By construction, any two adjacent stages consume different system resources, enabling overlapped execution among independent aggregation workflows.
Leveraging the coordinate-wise nature of aggregation,
\system{} partitions each client's update $\Delta_i$ into $m$ chunks $\Delta_{i,1}, \cdots, \Delta_{i,m}$.
This divides the original aggregation task into $m$ independent sub-tasks, where the $j$-th sub-task aggregates the $j$-th chunks of all clients, i.e.,
$\sum_i \Delta_i = (\sum_i \Delta_{i,1}) \Vert \cdots \Vert (\sum_i \Delta_{i,m})$, 
where $\Vert$ denotes concatenation.
As such, \system{} can enable pipeline parallelism by scheduling the processing stages of the $m$ sub-tasks.

\begin{figure}[t]
	\centering
	\includegraphics[width=1.0\columnwidth]{./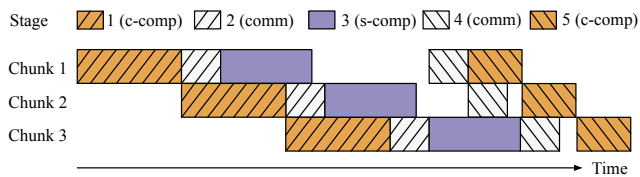}
	\caption{Pipeline scheduling of 3 chunk-aggregation tasks for distributed DP in 5 stages, as specified in Table~\ref{tab:pipeline_stage}.}
	\label{fig:pipeline}
\end{figure}

\begin{figure*}[t]
	\centering
	\includegraphics[width=1.0\linewidth]{./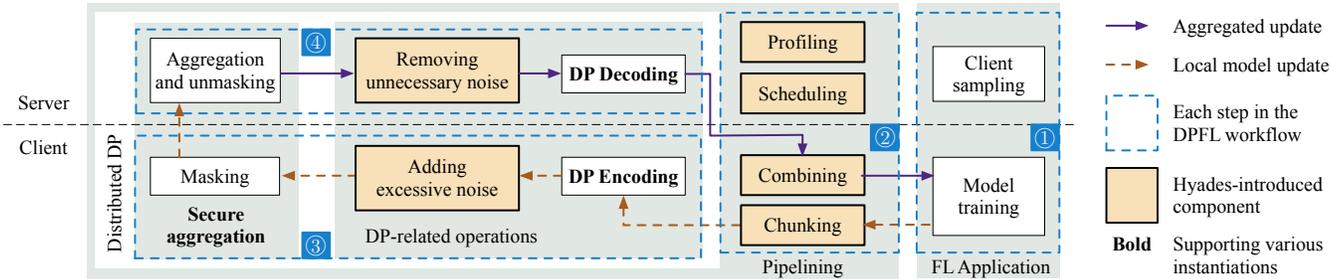}
	\caption{An overview of \system{}' system architecture and how it fits in 
		the existing FL workflow.}
	\label{fig:overview}
\end{figure*}

\PHM{Reducing Design Space.}
Note that each partition configuration is associated with a completion time that is achieved by executing that configuration.
Hence, deriving an optimal pipelined execution plan essentially requires searching for the configuration with the shortest completion time.
To avoid a complex combinatorial problem, we focus on evenly partitioning model updates, which reduces the problem to deciding only $m$, i.e., the number of chunks.
Figure~\ref{fig:pipeline} illustrates
a pipeline execution result with 3 equally-sized aggregation sub-tasks.
In Section~\ref{sec:evaluation_efficiency_pipeline}, we show that such a reduced consideration suffices to gain a remarkable speedup in practice.

\subsection{Determining Optimal Number of Chunks}
\label{sec:pipeline_determining}

While a formal description of the optimization problem for pipelined execution is deferred to Appendix~\ref{sec:appendix_pipeline} for brevity,
it is clear that the key to solving an optimal pipelining plan lies in accurately determining the completion time for each possible choice of $m$.
This requires a performance model and a profiling approach that both fit the FL practice.

\PHM{Performance Model with Intervention Accounted.}
To compute the completion time associated with a specific $m$, \system{} relies on the following performance model that empirically characterizes how the processing time for a sub-task at a stage $s$, denoted by $\tau_s$, is related to $m$:
\begin{equation}
	\tau_s = \beta_{s,1} \frac{d}{m} + \beta_{s,2} {m} + \beta_{s,3},
	\label{eq:pipeline}
\end{equation}
where $d$ is the update size; $\beta_{s,1}, \beta_
{s,2}$ and $\beta_{s,3}$ are the profiled parameters
used to weigh the impact of partition size, inter-task intervention, and constant cost, respectively.
The first and third terms are intuitive, while the second term is specifically designed for FL.
Unlike traditional ML where each node can be dedicated to one task, resources in FL are scarce, less capable, and do not provide strong isolation across tasks.
As the only device contributed by the owner client, a mobile device is seldom fully committed to pipeline stages that use \texttt{c-comp}.
Instead, some of its CPU cycles will be spent on network IO to facilitate stages that use \texttt{comm}.
Such distraction can be accentuated as the pipeline goes deeper.
\system{} thus respects this intervention effect across pipelined tasks.

 \PHM{Parameter Profiling.}
$\beta_1, \beta_2$, and $\beta_3$ in Equation~\ref{eq:pipeline} depend only on the hardware capabilities of the server and the participating clients, especially the slowest one.
As the sampled clients' tail latency does not vary vastly across rounds in practice, we let \system{} profile these constants by linear regression with offline micro-benchmarking, which executes the protocol with small-scale proxy data for certain rounds.
Note that such lightweight profiling can also be conducted online by interleaving it with the training workflow if needed.


\section{Implementation}
\label{sec:implementation}

We have implemented \system{} with 10.3k lines of Python code. It leverages \texttt
{PyTorch}~\cite{paszke2019pytorch} to instantiate FL applications and
employs the distributed DP protocol with DSkellam~\cite
{agarwal2021skellam}.

\PHM{System Architecture.} 
Figure~\ref{fig:overview} shows how \system{} fits in the existing FL workflow, with yellow boxes being \system-introduced components.  \textcircled{\raisebox{-1.0pt}{1}} \emph{Client sampling and training}: at the beginning of each round, the server randomly samples a subset of available clients as participants. The sampled clients then fetch the global model from the server and compute local updates using private data.
\textcircled{\raisebox{-1.0pt}{2}} \emph{Pipeline preparation}: based on the optimal pipeline execution plan provided by the server, each client chunks the local update for pipelined aggregation.
\textcircled{\raisebox{-1.0pt}{3}} \emph{Client processing}: for each update chunk, the client perturbs it with DSkellam's encoding scheme  and our \texttt{XNoise} noise enforcement approach. The perturbed chunk is further masked following the secure aggregation protocol. 
\textcircled{\raisebox{-1.0pt}{4}} \emph{Server aggregation}: the server aggregates and unmasks the received update chunks, removes the excessive parts of their DP noises to ensure that the residual noise remains at the minimum required level, and uses each aggregated update chunk to refine the respective part of the global model.

\PHM{Programming Interface.}
Despite our prototype choice, \system{} is proactively designed to be complementary to existing differentially private FL (DPFL) frameworks.
In particular, \system{} offers a user-friendly programming interface for developers to implement a variety of privacy and security building blocks.
Further details are provided in
Appendix~\ref{sec:appendix_programming}.
To the best of our knowledge, \system{} is the first generic and end-to-end implementation of distributed DP with pipeline acceleration.

\section{Evaluation}
\label{sec:evaluation}

We evaluate \system{}'s effectiveness on three CV and NLP FL tasks in the semi-honest setting.
The highlights of our evaluation are listed below.

\begin{enumerate}[nosep]
    \item Our noise enforcement scheme, \texttt{XNoise}, ensures that the target privacy level is consistently attained, even when client dropout occurs, without impairing model utility (\cref{sec:evaluation_privacy}).
    The runtime overhead is deemed acceptable even without pipeline acceleration, and the network overhead remains constant despite the model's expanding size (\cref{sec:evaluation_efficiency_noise}).
    \item The pipeline-parallel aggregation design in \system{} significantly enhances the training speed, resulting in up to 2.4$\times$ improvement in the round time (\cref{sec:evaluation_efficiency_pipeline}).
\end{enumerate}

\subsection{Methodology}
\label{sec:evaluation_methodology}

\PHB{Datasets and Models.}
We run two categories of applications with three real-world datasets of different scales.

\begin{itemize}
	\item \emph{Image Classification}: the first dataset, CIFAR-10~\cite{krizhevsky2009learning}, consists of 60k colored images categorized into 10 classes.
	We train a ResNet-18~\cite{he2016deep} model with 11M parameters over 100 clients using
	a non-IID data distribution, generated by applying latent Dirichlet allocation (LDA)~\cite{hsu2019measuring, reddi2020adaptive, al2020federated, acar2021federated} with concentration parameters set to 1.0 (i.e., label distributions highly skewed across clients).
	The second dataset, FEMNIST~\cite{caldas2018leaf}, consists of 805k greyscale images classified into 62 classes.
	The dataset was partitioned by the original data owners, and we merge every three owners' data to form a client's dataset.
	We train a CNN model~\cite{kairouz2021distributed, agarwal2021skellam} with 1M parameters over 1000 clients.
	\item \emph{Language Modeling}: the large-scale Reddit dataset~\cite{reddit}.
	We train an Albert~\cite{lan2020albert} model with 15M parameters over 200 clients for next-word prediction.
\end{itemize}

\PHM{Experiment Setup.}
We launch an AWS EC2 \texttt{r5.4xlarge} instance (16 vCPUs and 128 GB memory) for the server and one \texttt{c5.xlarge} (4 vCPUs and 8 GB memory) instance for each client, aiming to match the computing power of mobile devices.
To emulate hardware heterogeneity, we set the response latencies of clients to follow the Zipf distribution~\cite{jiang2022towards, lee2018pretzel, tian2021crystalperf, jia2021boki, jiang2022pisces} with $a = 1.2$ (moderately skewed) such that the end-to-end latency of the $i$-th slowest client is proportional to $i^{-a}$.
We also emulate network heterogeneity by throttling clients' bandwidth to fall within the range [21Mbps, 210Mbps] to match the typical mobile bandwidth~\cite{ciscoannual} and meanwhile follow another independent Zipf distribution with $a = 1.2$.

\begin{figure}[t]
	\centering
	\begin{subfigure}[b]{0.49\columnwidth}
		\centering
		\includegraphics[width=\columnwidth]{./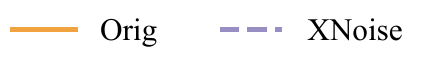}
	\end{subfigure} \\
	\begin{subfigure}[b]{0.32\columnwidth}
		\centering
		\includegraphics[width=\columnwidth]{./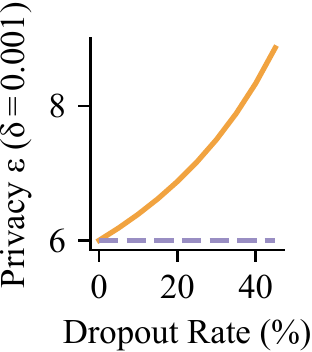}
		\caption{FEMNIST.}
		\label{fig:evaluation_privacy_cnn}
	\end{subfigure} \hfill
	\begin{subfigure}[b]{0.32\columnwidth}
		\centering
		\includegraphics[width=\columnwidth]{./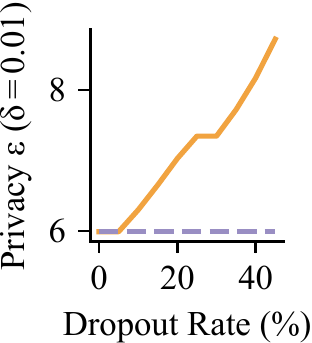}
		\caption{CIFAR-10.}
		\label{fig:evaluation_privacy_resnet18}
	\end{subfigure} \hfill
	\begin{subfigure}[b]{0.32\columnwidth}
		\centering
		\includegraphics[width=\columnwidth]{./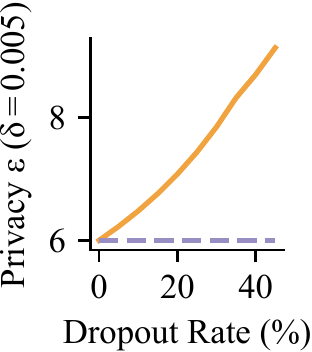}
		\caption{Reddit.}
		\label{fig:evaluation_privacy_albert}
	\end{subfigure}
	\caption{
		Privacy budget consumption.
		A larger $\epsilon$ corresponds to worse privacy preservation.
	}
	\label{fig:evaluation_privacy}
\end{figure}

\PHM{Hyperparameters.}
For FL training, we use the mini-batch SGD for FEMNIST and CNN and AdamW~\cite{loshchilov2018decoupled} for Reddit, all with momentum set to 0.9.
The number of training rounds and local epochs are 50 and 2 for both FEMNIST and Reddit, and 150 and 1 for CIFAR-10, respectively.
The batch size and learning rate are 20 and 0.01 for FEMNIST, 128 and 0.005 for CIFAR-10, and 20 and 8e-5 for Reddit, respectively.

For distributed DP, we set the privacy budget $\epsilon$ to $6$ and $\delta$ the reciprocal of the total number of clients as they represent standard privacy in the DPFL literature~\cite{agarwal2021skellam, kairouz2021distributed, stevens2022efficient}.
We fix the signal bound multiplier $k=3$, bias $\beta = e^{-0.5}$, and bit-width $b=20$ for the configuration of DSkellam~\cite{agarwal2021skellam} as specified in the original paper.
The L2-norm clipping bounds~\cite{agarwal2021skellam} for FEMNIST and CIFAR-10 is set to 1 and 3, respectively. Also, there are 100 and 16 clients being sampled in each round for FEMNIST and CIFAR-10, respectively.
While we base the implementation of secure aggregation on SecAgg~\cite{bonawitz2017practical} when evaluating our noise enforcement approach, \texttt{XNoise}, we also implement SecAgg+~\cite{bell2020secure} in Section~\ref{sec:evaluation_efficiency_pipeline} to highlight the generality of our distributed pipeline architecture.

\PHM{Dropout Model.}
We assume that when clients drop out of the protocol, they drop out after being sampled but before sending their masked and perturbed update to the server.
To study the impact of various severities, we let clients randomly drop with a configurable rate in each training round.
The dropout rate is consistent within a training process, while varying from 0 to 40\% across different processes.

\PHM{Baseline.} We compare \system{} against \texttt{Orig}, the original, commonly-used distributed DP protocol that lacks the ability to enforce the target noise level (Definition~\ref{def:orig} in \cref{sec:enforcement}) and support pipeline execution.

\subsection{Effectiveness of Noise Enforcement}
\label{sec:evaluation_privacy}

\PHM{\texttt{XNoise} Ensures Privacy without Sacrificing Utility.}
Figure~\ref{fig:evaluation_privacy} displays the end-to-end privacy budget consumption.
As expected, \texttt{XNoise} achieves the target privacy  ($\epsilon = 6$) in all cases by accurately enforcing the target noise (Theorem~\ref{thm:xnoise-prec}).
On the other hand, due to the missing noise contributions from dropped clients, the overall privacy budget consumed by \texttt{Orig} dramatically grows as the severity of client dropout increases.
For example, when the dropout rate is 40\%, training FEMNIST, CIFAR-10, and Reddit to the preset number of rounds ends up consuming an $\epsilon$ of 8.3, 8.2, and 8.7, respectively.

\begin{table}[]
	\centering
	\caption{Final testing accuracy (for \textbf{F}EMNIST and \textbf{C}IFAR-10) or perplexity (for \textbf{R}eddit, the lower, the better) of \texttt{\textbf{Ori}g} and \texttt{\textbf{XNo}ise} across various dropout rates $d$.}
	\label{tab:utility}
	\resizebox{1.0\columnwidth}{!}{%
		\begin{tabular}{c|cc|cc|cc|cc|cc}
			\hline
			$d$ &
			\multicolumn{2}{c|}{0} &
			\multicolumn{2}{c|}{10\%} &
			\multicolumn{2}{c|}{20\%} &
			\multicolumn{2}{c|}{30\%} &
			\multicolumn{2}{c}{40\%} \\ \hline
			&
			\texttt{Ori} &
			\texttt{XNo} &
			\texttt{Ori} &
			\texttt{XNo} &
			\texttt{Ori} &
			\texttt{XNo} &
			\texttt{Ori} &
			\texttt{XNo} &
			\texttt{Ori} &
			\texttt{XNo} \\ \hline
			F  & 61.3 & 61.4 & 61.4 & 61.4 & 61.2 & 61.4 & 61.2 & 61.2  & 61.4 & 61.5 \\
			C & 66.5 & 66.3 & 66.7  & 66.9 & 66.6 & 65.7 & 64.3 & 65.7 & 63.8 & 64.2  \\
			R & 2169 & 2142 & 2158  & 2179 & 2286 & 2285 & 2294 & 2317 & 2299 & 2329  \\
			\hline
		\end{tabular}
	}%
\end{table}

\begin{figure}[t]
	\centering
	\begin{subfigure}[b]{0.49\columnwidth}
			\centering
			\includegraphics[width=\columnwidth]{./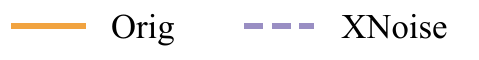}
		\end{subfigure} \\
	\begin{subfigure}[b]{0.32\columnwidth}
			\centering
			\includegraphics[width=\columnwidth]{./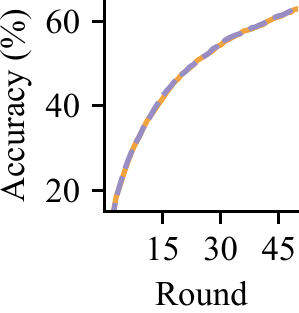}
			\caption{FEMNIST.}
			\label{fig:evaluation_round_acc_cnn}
		\end{subfigure} \hfill
	\begin{subfigure}[b]{0.35\columnwidth}
			\centering
			\includegraphics[width=\columnwidth]{./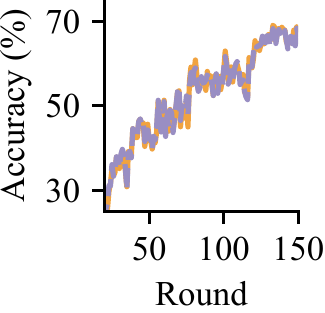}
			\caption{CIFAR-10.}
			\label{fig:evaluation_round_acc_resnet18}
		\end{subfigure} \hfill
		\begin{subfigure}[b]{0.29\columnwidth}
			\centering
			\includegraphics[width=\columnwidth]{./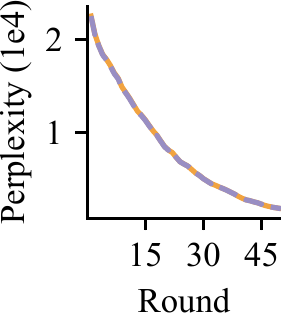}
			\caption{Reddit.}
			\label{fig:evaluation_round_acc_albert}
		\end{subfigure}
	\caption{Round-to-accuracy performance (20\% dropout).}
	\label{fig:evaluation_round_acc}
\end{figure}

\begin{figure*}[t]
    \centering
    \begin{subfigure}[b]{0.45\linewidth}
        \centering
        \includegraphics[width=\columnwidth]{./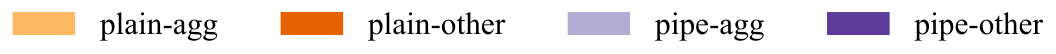}
    \end{subfigure}
    \\
    \begin{subfigure}[b]{0.245\linewidth}
        \centering
        \includegraphics[width=\columnwidth]{./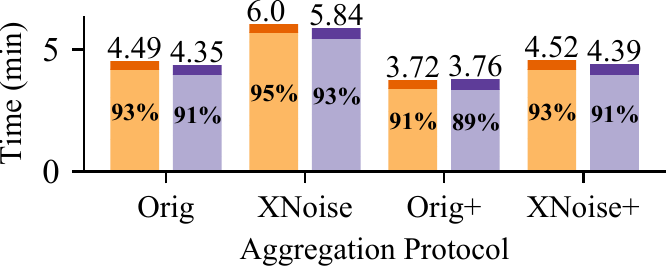}
        \caption{FEMNIST, CNN, $d=0\%$.}
        \label{fig:evaluation_efficiency_femnist_cnn_0}
    \end{subfigure} \hfill
	\begin{subfigure}[b]{0.245\linewidth}
		\centering
		\includegraphics[width=\columnwidth]{./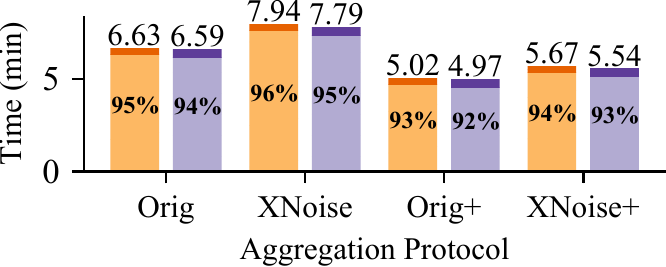}
		\caption{FEMNIST, CNN, $d=10\%$.}
		\label{fig:evaluation_efficiency_femnist_cnn_10}
	\end{subfigure} \hfill
	\begin{subfigure}[b]{0.245\linewidth}
		\centering
		\includegraphics[width=\columnwidth]{./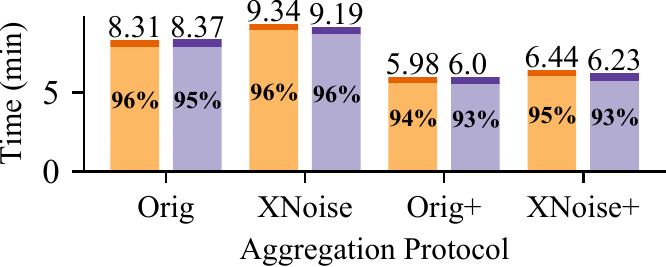}
		\caption{FEMNIST, CNN, $d=20\%$.}
		\label{fig:evaluation_efficiency_femnist_cnn_20}
	\end{subfigure} \hfill
	\begin{subfigure}[b]{0.245\linewidth}
		\centering
		\includegraphics[width=\columnwidth]{./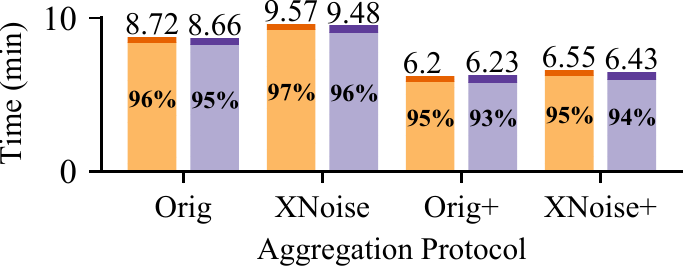}
		\caption{FEMNIST, CNN, $d=30\%$.}
		\label{fig:evaluation_efficiency_femnist_cnn_30}
	\end{subfigure} \hfill
	\begin{subfigure}[b]{0.245\linewidth}
        \includegraphics[width=\columnwidth]{./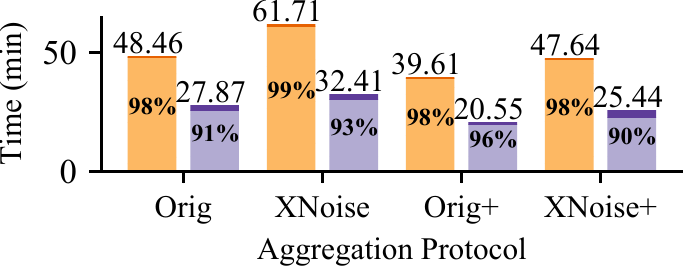}
	\caption{FEMNIST, ResNet-18, $d=0\%$.}
	\label{fig:evaluation_efficiency_femnist_resnet18_0}
	\end{subfigure} \hfill
	\begin{subfigure}[b]{0.245\linewidth}
	\centering
	\includegraphics[width=\columnwidth]{./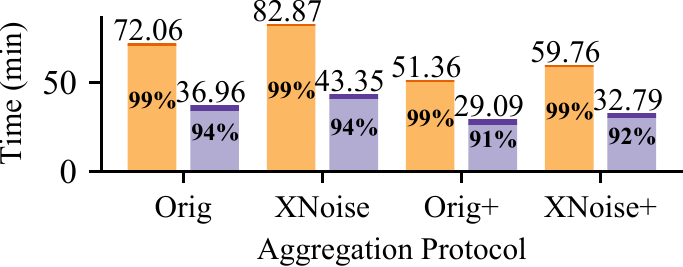}
	\caption{FEMNIST, ResNet-18, $d=10\%$.}
	\label{fig:evaluation_efficiency_femnist_resnet18_10}
	\end{subfigure} \hfill
	\begin{subfigure}[b]{0.245\linewidth}
	\centering
	\includegraphics[width=\columnwidth]{./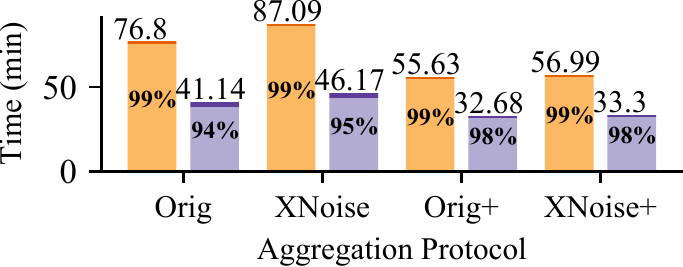}
	\caption{FEMNIST, ResNet-18, $d=20\%$.}
	\label{fig:evaluation_efficiency_femnist_resnet18_20}
	\end{subfigure} \hfill
	\begin{subfigure}[b]{0.245\linewidth}
	\centering
	\includegraphics[width=\columnwidth]{./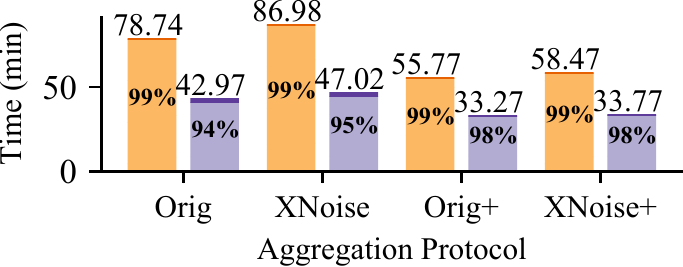}
	\caption{FEMNIST, ResNet-18, $d=30\%$.}
	\label{fig:evaluation_efficiency_femnist_resnet18_30}
	\end{subfigure} \hfill
    \begin{subfigure}[b]{0.245\linewidth}
        \centering
        \includegraphics[width=\columnwidth]{./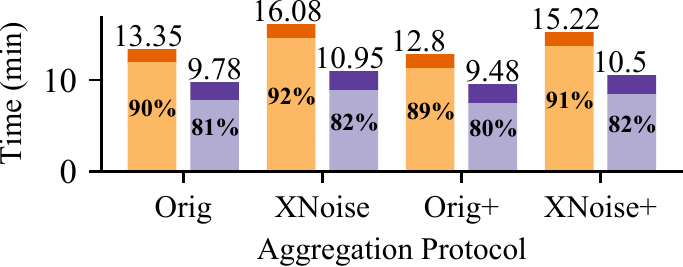}
        \caption{CIFAR-10, ResNet-18, $d=0\%$.}
        \label{fig:evaluation_efficiency_cifar10_resnet18_0}
    \end{subfigure} \hfill
	 \begin{subfigure}[b]{0.245\linewidth}
		\centering
		\includegraphics[width=\columnwidth]{./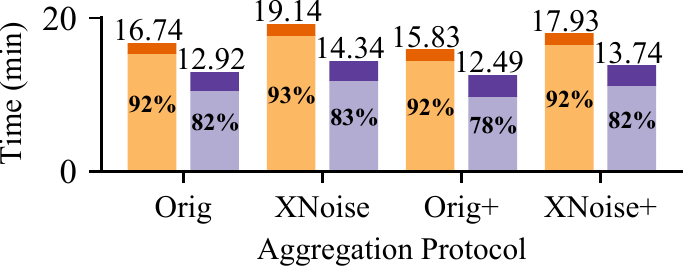}
		\caption{CIFAR-10, ResNet-18, $d=10\%$.}
		\label{fig:evaluation_efficiency_cifar10_resnet18_10}
	\end{subfigure} \hfill
    \begin{subfigure}[b]{0.245\linewidth}
		\centering
		\includegraphics[width=\columnwidth]{./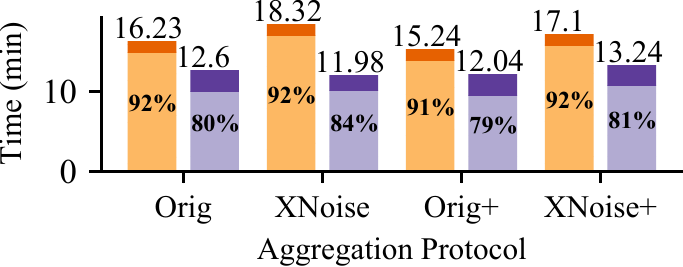}
		\caption{CIFAR-10, ResNet-18, $d=20\%$.}
		\label{fig:evaluation_efficiency_cifar10_resnet18_20}
	\end{subfigure} \hfill
	 \begin{subfigure}[b]{0.245\linewidth}
		\centering
		\includegraphics[width=\columnwidth]{./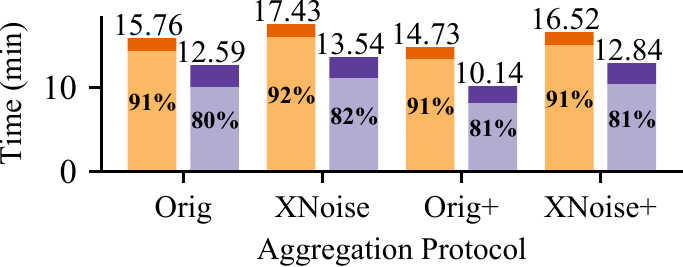}
		\caption{CIFAR-10, ResNet-18, $d=30\%$.}
		\label{fig:evaluation_efficiency_cifar10_resnet18_30}
	\end{subfigure} \hfill
    \begin{subfigure}[b]{0.245\linewidth}
        \centering
        \includegraphics[width=\columnwidth]{./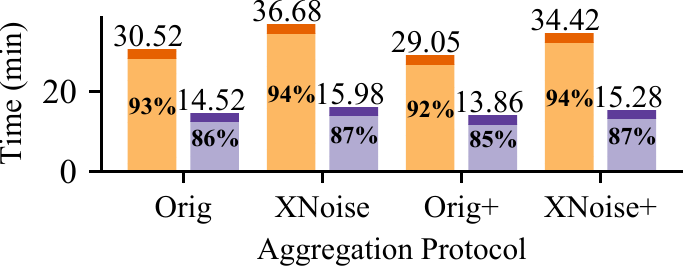}
        \caption{CIFAR-10, VGG-19, $d=0\%$.}
        \label{fig:evaluation_efficiency_cifar10_vgg19_0}
    \end{subfigure}\hfill
	\begin{subfigure}[b]{0.245\linewidth}
		\centering
		\includegraphics[width=\columnwidth]{./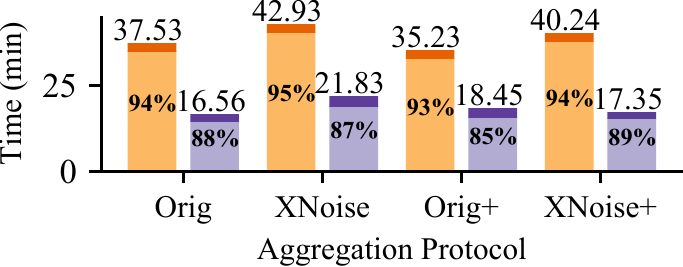}
		\caption{CIFAR-10, VGG-19, $d=10\%$.}
		\label{fig:evaluation_efficiency_cifar10_vgg19_10}
	\end{subfigure} \hfill
    \begin{subfigure}[b]{0.245\linewidth}
        \centering
        \includegraphics[width=\columnwidth]{./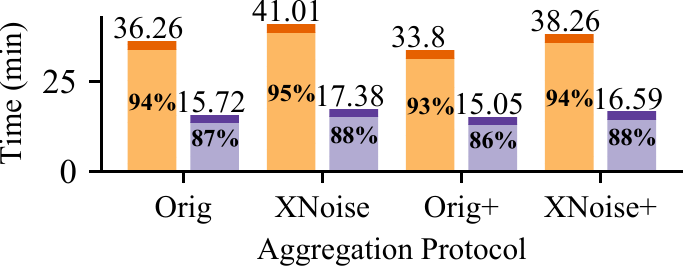}
        \caption{CIFAR-10, VGG-19, $d=20\%$.}
        \label{fig:evaluation_efficiency_cifar10_vgg19_20}
    \end{subfigure} \hfill
	\begin{subfigure}[b]{0.245\linewidth}
		\centering
		\includegraphics[width=\columnwidth]{./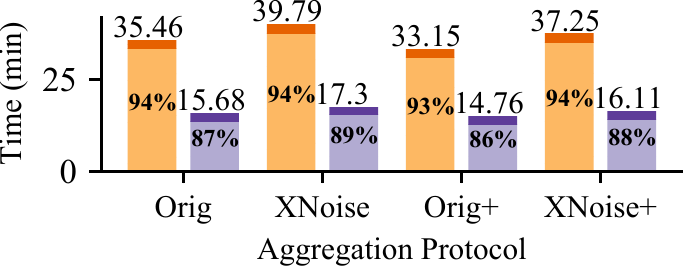}
		\caption{CIFAR-10, VGG-19, $d=30\%$.}
		\label{fig:evaluation_efficiency_cifar10_vgg19_30}
	\end{subfigure} \hfill
    \caption{The round time breakdown for \textbf{plain} execution and \textbf{pipe}line acceleration. The implemented secure aggregation is SecAgg (w/o \textbf{+}) and SecAgg+ (w/ \textbf{+}). The number of sampled clients for FEMNIST and CIFAR-10 are 100 and 16, respectively. The model size for CNN, ResNet-18, and VGG-19 is 1M, 11M, and 20M, respectively. $d$ is the per-round dropout rate.}
    \label{fig:evaluation_efficiency}
\end{figure*}

We also report the final model accuracy in Table~\ref{tab:utility}.
Compared to \texttt{Orig}, which fails to achieve the target privacy in the presence
of client dropout, our \texttt{XNoise} converges at the same speed 
(as exemplified by the learning curves in Figure~\ref{fig:evaluation_round_acc} where the per-round dropout rate is 20\%)
and induces no more than 0.9\% accuracy loss as it uses the minimum noise required to maintain the desired level of privacy.
It should be noted that the accuracies obtained with \system{} are similar to those in other FL studies that use distributed DP~\cite{kairouz2021distributed, agarwal2021skellam, stevens2022efficient}.
This is because FL clients have limited and non-IID data, and model updates are discretized for secure aggregation.
Additionally, in some cases, \texttt{XNoise} achieves higher accuracy than \texttt{Orig}, even though the latter uses less noise.
This is because the stochasticity introduced by slight additional random noise can act as a regularizer to reduce overfitting~\cite{neelakantan2015adding, liu2020fedsel}.

\subsection{Efficiency of Noise Enforcement}~\label{sec:evaluation_efficiency_noise}

\PHM{\texttt{XNoise} Induces Acceptable Overhead in Time.}
Figure~\ref{fig:evaluation_efficiency} shows the average round time, broken down into two components: `agg', related to distributed DP operations, and `other', related to remaining workflow operations such as model training.
In a plain execution without pipeline acceleration (yellow bars for \texttt{Orig} and \texttt{XNoise} in Figure~\ref{fig:evaluation_efficiency_femnist_cnn_0} to~\ref{fig:evaluation_efficiency_femnist_cnn_30} and Figure~\ref{fig:evaluation_efficiency_cifar10_resnet18_0} to~\ref{fig:evaluation_efficiency_cifar10_resnet18_30}), \texttt{XNoise} extends the round time
by up to 34\% given no dropout, and by up to 19\%, 13\%, and 12\% when the per-round dropout rate is 10\%, 20\%, and 30\%, respectively.
This implies a negative relationship between the time cost and dropout severity, as the more clients that drop out, the less noise the server needs to remove in \texttt{XNoise} (Definition~\ref{def:xnoise-prec}).
Such a relationship also implies an acceptable overhead in practice, as \system{} targets scenarios with client dropout.
Moreover, the overhead can be further reduced by \system{}'s pipeline acceleration (\cref{sec:evaluation_efficiency_pipeline}).

\begin{table}[]
	\centering
	\caption{Additional per-round network footprint in MB for a surviving client in `\textbf{r}ebasing' and \texttt{\textbf{X}Noise}, compared to \texttt{Orig}, across various dropout rate $d$.}
	\label{tab:network}
	\resizebox{1.0\columnwidth}{!}{%
		\begin{tabular}{cc|cc|cc|cc}
			\hline
			\multicolumn{2}{c|}{Model size (\# parameters)}                  & \multicolumn{2}{c|}{5M}             & \multicolumn{2}{c|}{50M}            & \multicolumn{2}{c}{500M}            \\ \hline
			\multicolumn{1}{c|}{$d$}                     & \# sampled clients   & \multicolumn{1}{c|}{r} & X           & \multicolumn{1}{c|}{r} & X           & \multicolumn{1}{c|}{r} & X           \\ \hline
			\multicolumn{1}{c|}{\multirow{3}{*}{0\%}} &
			100 &
			\multicolumn{1}{c|}{\multirow{12}{*}{11.9}} &
			0.6 &
			\multicolumn{1}{c|}{\multirow{12}{*}{119.2}} &
			0.6 &
			\multicolumn{1}{c|}{\multirow{12}{*}{1192.1}} &
			0.6 \\
			\multicolumn{1}{c|}{}                      & 200 & \multicolumn{1}{c|}{} & 2.4 & \multicolumn{1}{c|}{} & 2.4 & \multicolumn{1}{c|}{} & 2.4 \\
			\multicolumn{1}{c|}{}                      & 300 & \multicolumn{1}{c|}{} & 5.5 & \multicolumn{1}{c|}{} & 5.5 & \multicolumn{1}{c|}{} & 5.5 \\ \cline{1-2} \cline{4-4} \cline{6-6} \cline{8-8} 
			\multicolumn{1}{c|}{\multirow{3}{*}{10\%}} & 100 & \multicolumn{1}{c|}{} & 0.6 & \multicolumn{1}{c|}{} & 0.6 & \multicolumn{1}{c|}{} & 0.6 \\
			\multicolumn{1}{c|}{}                      & 200 & \multicolumn{1}{c|}{} & 2.4 & \multicolumn{1}{c|}{} & 2.4 & \multicolumn{1}{c|}{} & 2.4 \\
			\multicolumn{1}{c|}{}                      & 300 & \multicolumn{1}{c|}{} & 5.3 & \multicolumn{1}{c|}{} & 5.3 & \multicolumn{1}{c|}{} & 5.3 \\ \cline{1-2} \cline{4-4} \cline{6-6} \cline{8-8} 
			\multicolumn{1}{c|}{\multirow{3}{*}{20\%}} & 100 & \multicolumn{1}{c|}{} & 0.6 & \multicolumn{1}{c|}{} & 0.6 & \multicolumn{1}{c|}{} & 0.6 \\
			\multicolumn{1}{c|}{}                      & 200 & \multicolumn{1}{c|}{} & 2.3 & \multicolumn{1}{c|}{} & 2.3 & \multicolumn{1}{c|}{} & 2.3 \\
			\multicolumn{1}{c|}{}                      & 300 & \multicolumn{1}{c|}{} & 5.2 & \multicolumn{1}{c|}{} & 5.2 & \multicolumn{1}{c|}{} & 5.2 \\ \cline{1-2} \cline{4-4} \cline{6-6} \cline{8-8} 
			\multicolumn{1}{c|}{\multirow{3}{*}{30\%}} & 100 & \multicolumn{1}{c|}{} & 0.6 & \multicolumn{1}{c|}{} & 0.6 & \multicolumn{1}{c|}{} & 0.6 \\
			\multicolumn{1}{c|}{}                      & 200 & \multicolumn{1}{c|}{} & 2.3 & \multicolumn{1}{c|}{} & 2.3 & \multicolumn{1}{c|}{} & 2.3 \\
			\multicolumn{1}{c|}{}                      & 300 & \multicolumn{1}{c|}{} & 5.2 & \multicolumn{1}{c|}{} & 5.2 & \multicolumn{1}{c|}{} & 5.2 \\ \hline
		\end{tabular}
	}%
\end{table}

\PHM{\texttt{XNoise}'s Network Overhead is Invariant of Model Size.}
The noise decomposition in \texttt{XNoise} allows for the transmission of noise seeds (\cref{sec:enforcement_intuition}).
To examine its practical advantage over `rebasing', we benchmark their additional network footprint induced to a surviving client compared to \texttt{Orig}.
As specified in real deployments, the size of a model weight, noise seed, Shamir share of seed, ciphertext of a share (the latter three are configured for \texttt{XNoise} only) are set to 2.5, 32, 16, and 120 in bytes, respectively.
As seen in Table~\ref{tab:network}, as the model size increases, the network overhead of \texttt{XNoise} remains constant and low, while that of `rebasing' grows linearly.
The communication time of `rebasing' can thus be prohibitive when the model is large in size (e.g., >500M) and/or the client's bandwidth is low (e.g., <1Mbps).
Additionally, a large overhead makes client failures in the middle of noise removal more likely, which `rebasing' cannot handle without the loss of model utility (\cref{sec:enforcement_intuition}).

It is important to highlight that the cost of \system{} is dependent on the number of sampled clients rather than the overall population size.
In FL systems, although the population can be extensive, the number of sampled clients is typically restricted to a few hundred~\cite{kairouz2019advances, bonawitz2019towards}.
This limitation is due to the fact that involving additional clients beyond a certain threshold only yields marginal benefits in terms of accelerating convergence~\cite{mcmahan2017communication, wang2021field}.
Therefore, the results presented in Table~\ref{tab:network} are evaluated from practical settings of sample sizes and are applicable in real-world scenarios.

\subsection{Efficiency of Pipeline Acceleration}~\label{sec:evaluation_efficiency_pipeline}

To study the impact of model sizes, we additionally train a ResNet-18 over the FEMNIST dataset, and a VGG-19 model~\cite{simonyan2014very} (20M parameters) over the CIFAR-10 dataset.

\PHM{\system{} Generally Benefits from Pipelining.}
Figure~\ref{fig:evaluation_efficiency} shows that \system{}'s pipeline acceleration benefits all the evaluated aggregation protocols by providing a generic pipeline architecture (\cref{sec:implementation}).
Specifically, utilizing idle resources with pipelined execution can speed up \texttt{Orig} by up to 2.3$\times$ (resp. 2.2$\times$) when the implemented secure aggregation is SecAgg (resp. SecAgg+), while \texttt{XNoise} can achieve a comparable maximum speedup of 2.4$\times$ (resp. 2.3$\times$) with pipelining.
We also notice that the speedup is similar across different evaluated dropout rates when fixing the use of a protocol.

\PHM{\system{} Gains More Speedup with Larger Models.}
The results in Figure~\ref{fig:evaluation_efficiency_cifar10_resnet18_0} to~\ref{fig:evaluation_efficiency_cifar10_vgg19_30} show that CIFAR-10 with ResNet-18 (11M) is accelerated by 1.3-1.5$\times$, while CIFAR-10 with VGG-19 (20M) is accelerated by 1.9-2.5$\times$ by pipelining, indicating that larger models benefit more from this approach.
Similar observations can be made when comparing the results of FEMNIST with CNN (1M) and with ResNet-18 (11M).
This trend can be explained by Amdahl's law~\cite{amdahl1967validity}: as the aggregation time of larger models has a higher dominant factor $p$ in the round latency (e.g., $p=$ 89-93\% in CIFAR-10 with ResNet-18 and $p=$ 93-95\% in CIFAR-10 with VGG-19), assuming the same speedup $s$ of the aggregation offered by pipelining, one can expect a higher overall speedup $S$ for larger models as $S = 1 / ((1 - p) + p/s)$.

\PHM{\system{} Scales with Number of Sampled Clients.}
As reported by Figure~\ref{fig:evaluation_efficiency_femnist_resnet18_0} to~\ref{fig:evaluation_efficiency_cifar10_resnet18_30}, CIFAR-10 with ResNet-18 (16 sampled clients) and FEMNIST with ResNet-18 (100 sampled clients) both benefit from pipelined execution, with FEMNIST achieving a larger speedup (1.7-2.0$\times$) than CIFAR-10 (1.3-1.5$\times$).
It is important to note that the dominant part of the round time is the aggregation process (\texttt{plain-agg}), which is independent of the learning task.
Therefore, the superior performance of \system{} on FEMNIST with ResNet-18 over CIFAR-10 with ResNet-18 can be mainly attributed to the larger number of sampled clients, indicating that \system{} scales well and may even perform better with large-scale training.
This further suggests that the reason why FEMNIST with CNN gains negligible speedup (Figure~\ref{fig:evaluation_efficiency_femnist_cnn_0} to~\ref{fig:evaluation_efficiency_femnist_cnn_30}) is due to its small model size, not large participation scale.
\section{Discussion}
\label{sec:discussion}

\PHB{Random Client Sampling with VRFs.}
As mentioned in~\cref{sec:background_threat}, we can enforce mild collusion among sampled clients even in the presence of a malicious server through the use of VRFs.
Our key insight is that incorporating verifiable randomness can prevent the server from manipulating the sampling process to include a disproportionate number of dishonest clients.
Given that dishonest clients only represent a small portion of the entire population (e.g., billions of Apple devices~\cite{apple2023report}), a VRF-based client selection can ensure that dishonest clients remain a minority in the sampled participants, thus effectively preventing Sybil attacks.

The expected low base rate of dishonest clients results from the prohibitive costs associated with creating and managing a large number of simulated clients.
The adversary is hindered by the high expenses of registering numerous identities to a public key infrastructure (PKI) operated by a qualified trust service provider~\cite{bonawitz2017practical, bell2020secure}.
Furthermore, maintaining a client botnet at scale incurs substantial monetary expenditure~\cite{zhao2023secure}.

Regarding the use of verifiable randomness, we briefly describe a possible design as introduced in a recent work~\cite{jiang2023secure}.
Initially, the server initiates a training round by making an announcement.
Each client within the population employs a VRF using its private key to generate a random number along with an associated proof, with the current round index serving as input.
By comparing the generated random number with a predetermined threshold, a client determines whether it should participate in the ongoing round.
Once a client decides to join, it informs the server about its random number and provides the corresponding proof.
Upon receiving responses from all participating clients, the server considers them as participants and broadcasts their responses for mutual verification. 
Ultimately, a participant proceeds with the training only if all verification tests are successfully passed.

In the above design, a client does not need to know every other clients to reproduce the sampling process for the entire population, as the server does. Instead, a sampled client only needs to refer to the identities of other sampled clients and verify their randomness. 
Moreover, the server can achieve a fixed sample size by first slightly adjusting the selection threshold for over-selection, and then discarding excessive clients based on indiscriminate criteria on their randomness.

\section{Related Work}
\label{sec:related}

The topic of differentially private FL (DPFL) has seen a surge of interest in recent years.
DP-FedAvg~\cite{mcmahan2018learning, ramaswamy2020training}, Distributed DP-SGD~\cite{balle2020privacy}, and DP-FTRL~\cite
{kairouz2021practical}  initiate DPFL in the central DP model where the server is trusted.
Unlike these works, \system{} studies semi-honest and malicious adversaries.

In distributed DP, recent efforts have made significant progress in integrating secure aggregation with DP mechanisms.
DDGauss~\cite{kairouz2021distributed} and DSkellam~\cite{agarwal2021skellam} provide end-to-end privacy analysis by combining the DP noise addition with SecAgg.
FLDP~\cite{stevens2022efficient} explores aggregation based on the learning with errors (LWE) problem~\cite{regev2009lattices}, using residual errors as DP noise.
\system{} complements these works by providing dropout
resilience and improved execution efficiency.

The work closest to \system{} is~\cite{baek2021enhancing} which tackles the dropout resilience problem in distributed DP.
Unlike \system{}, they adopt a `rebasing' design that is incompatible with transmitting seeds (thus, poor efficiency) and does not handle client dropout during noise removal (thus, poor robustness) (\cref{sec:enforcement_intuition}).
Also, it is unclear how to implement their noise enforcement securely against malicious adversaries.

Replacing secure aggregation with other cryptographic primitives in distributed DP is far less studied in FL mainly due to their inefficiency.
For instance, \cite{truex2019hybrid} explores the use of homomorphic encryption schemes, but induces an overhead of over 16 minutes to encrypt a model of size 1M.

Finally, several studies have focused on enhancing the SecAgg protocol, both in general-purpose scenarios~\cite{bell2020secure, so2021turbo, kadhe2020fastsecagg, so2022lightsecagg} and in those specific to FL~\cite{rathee2022elsa, ma2023flamingo, yang2023fast, liu2023dhsa}, without integration with DP.
\system{} is thus independent of these studies in terms of noise enforcement. \system{} also offers the first generic pipeline solution that can expedite them.

\section{Conclusion}
\label{sec:conclusion}

This paper presents \system{}, an efficient FL framework that
enables efficient and dropout-resilient distributed DP in realistic scenarios.
To handle client dropout, 
\system{} designs an efficient and secure ``add-then-remove'' approach to enforce the required noise at the target level precisely.
\system{} also enables pipeline parallelism for accelerated secure aggregation with a generic distributed parallel architecture.
Compared to existing distributed DP mechanisms in FL, \system{} enforces privacy guarantees with optimal model utility in the presence of client dropout without requiring manual strategies.
It also achieves up to 2.4$\times$ faster training completion.
\begin{acks}
    We extend our special thanks to Peter Kairouz, Zheng Xu, Marco Canini, and Suhaib A. Fahmy for their valuable discussions and suggestions that helped improve the early shape of this work.
    We are also grateful to our shepherd, Manuel Costa, and the anonymous EuroSys reviewers for their constructive feedback, which greatly enhanced the quality of this paper.
    We would like to acknowledge Shaohuai Shi for providing GPU clusters and environment settings.
    This work was supported in part by RGC RIF grant R6021-20 and RGC GRF grant 16211123.
\end{acks}

\bibliographystyle{ACM-Reference-Format}
\bibliography{./main.bib}

\appendix
\section{Proofs for Noise Enforcement Results}~\label{sec:appendix_xnoise}

\PHB{Proof for Theorem~\ref{thm:xnoise-prec} (\cref{sec:enforcement_add}).}

\begin{proof}
Before the server performs noise deduction, the aggregated update is randomized with noise level:

\begin{align*}
    & \sum_{c_j \in U \setminus D} \biggl( \frac{\sigma^2_*}{\lvert U \rvert} + \sum_{k=1}^T \frac{\sigma^2_*}{(\lvert U \rvert - k + 1)(\lvert U \rvert - k)} \biggr) \\
    =& \sum_{c_j \in U \setminus D} \biggl( \frac{\sigma^2_*}{\lvert U \rvert} + \sum_{k=1}^T \Bigl( \frac{\sigma^2_*}{\lvert U \rvert - k} - \frac{\sigma^2_*}{\lvert U \rvert - k + 1} \Bigr) \biggr) \\
    =& \sigma^2_* (\lvert U \rvert - \lvert D \rvert) \Bigl( \frac{1}{\lvert U \rvert} + \frac{1}{\lvert U \rvert - T} - \frac{1}{\lvert U \rvert} \Bigr) = \sigma^2_* \frac{\lvert U \rvert - \lvert D \rvert}{\lvert U \rvert - T}.
\end{align*}

Moreover, the total noise level deducted by the server is:

\begin{align*}
    & \sum_{c_j \in U \setminus D} \sum_{k = \lvert D \rvert + 1}^T \frac{\sigma^2_*}{(\lvert U \rvert - k + 1)(\lvert U \rvert - k)} \\
    =& (\lvert U \rvert - \lvert D \rvert) \sigma^2_* \sum_{k = \lvert D \rvert + 1}^T \Bigl(\frac{1}{\lvert U \rvert - k} - \frac{1}{\lvert U \rvert - k + 1}\Bigr). \\
    =& (\lvert U \rvert - \lvert D \rvert) \sigma^2_* \Bigl( \frac{1}{\lvert U \rvert - T} - \frac{1}{\lvert U \rvert - \lvert D \rvert} \Bigr) = \sigma^2_* \frac{T - \lvert D \rvert}{\lvert U \rvert - T}.
\end{align*}

Thus, the remaining noise level at the aggregated update is $\sigma^2_* \frac{\lvert U \rvert - \lvert D \rvert}{\lvert U \rvert - T} - \sigma^2_* \frac{T - \lvert D \rvert}{\lvert U \rvert - T} = \sigma^2_*$.
\end{proof}

\section{Proofs for Security Results}~\label{sec:appendix_security}

We hereby give formal arguments showing that our integration of \texttt{XNoise} and SecAgg (denoted by $\pi$ and specified in~\cref{sec:enforcement_consolidation}), preserves privacy against malicious adversaries.
As is standard, we consider only computationally-bounded adversarial parties, namely those whose strategies can be described by some probabilistic polynomial-time (PPT) algorithm $M$.
The proof will be performed in the so-called `Random Oracle' model~\cite{bellare1993random}.
We assume that a common random oracle $O$ is available to all the parties, who can each make arbitrarily many oracle queries to $O$ during the course of their execution.
In its primary form, given input $x$ and bit length $l$, $O$ outputs a binary string of length $l$, $O(x)$, such that i) each $O(x)$ is uniformly random and independent, and that ii) repeated queries on the same $x$ and $l$ give the same result.
Additionally, given input $x$, bit length $l$, vector length $V$ and distribution $\chi$, $O$ outputs a list of $L$ bit strings of length $l$, $O_\chi(x)$,  such that i) all the bit strings of $O_\chi(x)$ follow the same distribution $\chi$ and are mutually independent and all $O_\chi(x)$'s are mutually independent, and repeated queries on the same $x$, $l$, $V$, and $\chi$ give the same result.
We assume that all honest parties will substitute all \textbf{PRG} calls and noise sampling with calls to $O$.

Denote with $\bm{\Delta}_{U'} = \{\bm{\Delta}_u\}_{u \in U'}$ and $g_{U'} = \{g_{u, k}\}_{u \in U', k \in [0, T]}$ the inputs of any subset of users $U' \subseteq U$.
For fixed $\eta$, $t$, $U$, $T$, and a set $C$ of corrupted parties, we let $M_C$ indicate the polynomial-time algorithm that denotes the `next-message' function of parties in $C$.
We allow $M_C$ to dynamically choose which users to abort in each round, rather than the aborts being statically fixed beforehand.

The following lemma shows that the joint view of colluding corrupt parties in a real execution of $\pi$ can be simulated by \texttt{SIM} given only an (overly perturbed) sum of a (dynamically chosen) subset of at least $\delta$ honest users' input vectors and some set of seeds used in noise addition by them, meaning intuitively that the adversary can learn `nothing more' than these two pieces of information.
Note that we allow \texttt{SIM} to make a single query to an ideal functionality \texttt{Ideal} that is defined with appropriately chosen $\delta$ as follows

\begin{align*}
	&\textrm{\texttt{Ideal}}^\delta_{\bm{\Delta}_{U \setminus C}, g_{U \setminus C}} (L) = \begin{cases}
		\sum_{u \in L} \tilde{\bm{\Delta}}_u, \{g_{u, k}\}_{u \in L,  \lvert U \setminus L \rvert+ 1 - \lvert C \cap U \rvert \leq k \leq T},  \\
		\hfill \textrm{if \ } L \subseteq (U \setminus C) \textrm{\ and \ } \lvert L \rvert \geq \delta,\\
		\perp, \hfill \textrm{otherwise},
	\end{cases}
\end{align*} where $L$ can be chosen dynamically by \texttt{SIM} at run time.

\begin{lemma}
	There exists a PPT simulator \texttt{SIM} such that for all $\eta$, $t$, $U$, $T$, $C \subseteq U \cup \{S\}$, $\bm{\Delta}_{U \setminus C}$ and $g_{U \setminus C}$. If $2t > \lvert U \rvert + \lvert C \cap U \lvert$
	, the output of \texttt{SIM} is computationally indistinguishable from the output of \texttt{REAL}$^{\eta, t, U, T}_{\pi, C}$:

\begin{equation*}
	\textrm{\texttt{REAL}}^{\eta, t, U,T}_{\pi, C}(M_C, \bm{\Delta}_{U \setminus C}, g_{U \setminus C}) \approx_c \textrm{\texttt{SIM}}_C^{\eta, t, U, T, \textrm{\texttt{Ideal}}^\delta_{\bm{\Delta}_{U \setminus C}, g_{U \setminus C}}}(M_C),
\end{equation*} where $\delta = t - \lvert C \cap U \lvert$.
\label{lemma:security}
\end{lemma}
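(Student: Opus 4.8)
The plan is to prove Lemma~\ref{lemma:security} by a standard simulation-based hybrid argument, extending the malicious-security proof of SecAgg~\cite{bonawitz2017practical} to cover the \texttt{XNoise}-specific noise seeds $g_{u,k}$ and their secret shares. The simulator \texttt{SIM} runs the adversary's next-message function $M_C$ internally, playing the roles of all honest users and responding to $M_C$'s messages when the server is corrupt. It programs the random oracle $O$ to answer all \textbf{PRG} and noise-sampling queries, and it defers to the ideal functionality \texttt{Ideal} only at the unmasking and excessive-noise-removal stages, where it obtains the aggregate $\sum_{u \in L} \tilde{\bm{\Delta}}_u$ together with the revealed seed set for the surviving honest set $L$ that $M_C$ selects dynamically. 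The objective of the hybrid chain is to strip \texttt{SIM}'s dependence on individual honest inputs one cryptographic primitive at a time, until only the \texttt{Ideal} output is needed.

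Starting from the real execution \texttt{REAL}, I would first invoke the security of the key agreement to replace every key $s_{u,v} \leftarrow \textbf{KA.agree}$ derived between two honest users with a uniformly random value, so that the pairwise masks $\bm{p}_{u,v}$ become independent random vectors. Next, using the IND-CPA security of \textbf{AE}, I would replace each ciphertext $e_{u,v}$ exchanged between honest users with an encryption of zeros, which is sound because the adversary never learns the honest–honest encryption key. Then, appealing to the privacy of Shamir secret sharing, I would replace the shares of the honest secrets $s_u^{SK}$, $b_u$, and each $g_{u,k}$ with $k>0$ that are handed to corrupt parties with shares of dummy values: the threshold bound $2t > \lvert U \rvert + \lvert C \cap U \rvert$ forces the corrupt parties to hold strictly fewer than $t$ shares of any honest secret, so this step is information-theoretically undetectable. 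Finally, I would replace the honest self-masks $\bm{p}_u = \textbf{PRG}(b_u)$ with fresh randomness, after which each honest $\bm{y}_u$ is uniform subject only to the aggregate constraint.

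At the terminal hybrid, \texttt{SIM} samples the honest $\bm{y}_u$'s so that their contribution to the server's reconstruction matches the aggregate $\sum_{u \in L} \tilde{\bm{\Delta}}_u$ returned by \texttt{Ideal}; the same threshold bound ensures that whenever $M_C$ drives honest users past the consistency check, the surviving honest set satisfies $\lvert L \rvert \ge \delta = t - \lvert C \cap U \rvert$, so \texttt{Ideal} returns a value rather than $\perp$. For the \texttt{XNoise} layer, I would argue that the seeds $g_{u,k}$ revealed in Stage~5 are exactly those in the index window $\lvert U \setminus L \rvert + 1 - \lvert C \cap U \rvert \le k \le T$ that \texttt{Ideal} exposes, while the remaining seeds stay hidden because their shares at corrupt parties are sub-threshold. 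The UF-CMA security of \textbf{SIG} enters here: it shows that $M_C$ cannot forge honest users' signatures in the consistency check, so every honest user that proceeds holds the same surviving set and the simulated views stay mutually consistent.

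The main obstacle I anticipate is the dynamic, malicious abort pattern combined with the two-layer seed bookkeeping. Because $M_C$ may adaptively drop honest users at any round and may attempt to deliver inconsistent survival sets to different honest users, the simulator must determine at run time, and consistently across parties, which of the secret-shared $g_{u,k}$ become reconstructible (and hence must coincide with the seeds \texttt{Ideal} reveals) versus which remain private, all while keeping the earlier share-substitution hybrids consistent with these later revelations. Reconciling this bookkeeping with the signature-enforced agreement on the surviving set, and verifying that the window $\lvert U \setminus L \rvert + 1 - \lvert C \cap U \rvert \le k \le T$ is precisely what both the real protocol leaks and \texttt{Ideal} provides, is the delicate core of the argument.
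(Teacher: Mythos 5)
Your proposal is correct and follows essentially the same route as the paper's proof: a hybrid argument that first reduces the SecAgg layer to its known malicious-security simulation (the paper imports this as a single hybrid citing the SecAgg proof, whereas you unroll the same KA/AE/Shamir/PRG steps explicitly), and then handles the \texttt{XNoise} seeds by arguing that unrevealed $g_{u,k}$ are information-theoretically hidden via sub-threshold shares while the revealed window is exactly what \texttt{Ideal} supplies. The "delicate core" you flag—consistent run-time bookkeeping of which seeds become reconstructible under dynamic aborts—is precisely what the paper's Hyb$_4$/Hyb$_5$ resolve via the signature-enforced set $Q$.
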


\begin{proof}
We prove the theorem by a standard hybrid argument.
We will define a simulator \texttt{SIM} through a series of (polynomially many) subsequent modifications to the real execution \texttt{REAL}, so that the views of $M_C$ in any two subsequent executions are computationally indistinguishable.

\begin{enumerate}[label=Hyb$_{\arabic*}$, leftmargin=2.5\parindent]
	\item This random variable is distributed exactly as the view of $M_C$ is \texttt{REAL}, the joint view of the parties $C$ in a real execution of the protocol.
	\item In this hybrid, \texttt{SIM} has access to all the inputs of honest parties $\bm{\Delta}_{U \setminus C}$, and $g_{U \setminus C}$, and runs the full protocol with $M_C$, which includes simulating the random oracle `on the fly' (using a dynamically generated table), the PKI, and the rest of the \textbf{Setup} phase. The view of the adversary here is the same as the previous one.
	\item In this hybrid, \texttt{SIM} replaces the run of \texttt{SecAgg} with the ideal simulation of \texttt{SecAgg} as specified in the original paper.\footnote{See the proof of Theorem A.2 in the full paper of SecAgg~\cite{bonawitz2017practical2}.} As established in the proof there, this hybrid is computationally indistinguishable from the previous one (on condition that $2t > \lvert U \rvert + \lvert C \cap U \rvert$), and differs from it only in polynomially many modifications as follows:
	\begin{itemize}
		\item \texttt{SIM} aborts if $M_C$ provides any of the honest parties $u$ (in \textbf{AdvertiseKey}) with a correct signature w.r.t. an honest $v$'s public key, on $(c_v^{PK} \Vert s_v^{PK})$ different from those sent by $v$.
		\item For any pair of honest users $u$, $v$, \texttt{SIM} encrypts the messages among them (in \textbf{ShareKeys}, before being given to $M_C$) and decrypts them (in \textbf{Unmasking}, after $M_C$ has delivered them) using a uniformly random key (as opposed to the one obtained through the key agreement $\textrm{\textbf{KA.agree}}$ $(c_u^{SK}, c_v^{PK})$).
		\item \texttt{SIM} additionally aborts if $M_C$ succeeds to deliver, in \textbf{ShareKeys}, a message to an honest client $u$ on behalf of another honest client $v$, such that i) the message is different from the message \texttt{SIM} had given $M_C$ in \textbf{ShareKeys}, and ii) the message does not cause the decryption algorithm to fail.
		\item \texttt{SIM} substitutes all the encrypted shares sent between parties of honest users with encryptions of $0$. (It still returns the `real' shares in Round \textbf{Unmasking} as it did before).
		\item \texttt{SIM} aborts if $M_C$ provides any of the honest parties (in \textbf{ConsistencyCheck}) with a signature on a set which correctly verifies w.r.t. the public key of an honest party, but the honest client never produced a signature on that set.
		\item \texttt{SIM} aborts if $M_C$ queries the random oracle/\textbf{PRG} on input $b_u$ for some honest user $u$ (i.e., the value sampled by \texttt{SIM} on behalf of $u$ in \textbf{ShareKeys}) either i) before the adversary received the responses from the honest users in \textbf{Unmasking} or ii) after such responses have been received, but $u \notin Q$.\footnote{We define $Q \subseteq U$ as the only set such that there exists an honest user which received the set Q in \textbf{ConsistencyCheck} and later received at least $t$ valid signatures on it in \textbf{Unmasking}; or empty set if no such $Q$ exists.}
		\item \texttt{SIM} aborts if $M_C$ queries the random oracle/\textbf{PRG} on input $s_{u, v}$ for some honest users $u$, $v$ either i) before the adversary received the responses from the honest users in \textbf{Unmasking} or ii) after such responses have been received, but where $u$, $v \in Q$.
		\item \texttt{SIM} programs the random oracle to set $\textrm{\textbf{PRG}}(b_u)$ for all $u \in Q \setminus C$ to
		\begin{equation*}
			\textrm{\textbf{PRG}}(b_u) \leftarrow \bm{y}_u - \bm{w}_u - \sum_{v \in F_u \setminus Q} \textrm{\textbf{PRG}}(s_{u, v})
		\end{equation*} where $\{\bm{w}_u\}_{u \in Q \setminus C}$ are chosen uniformly at random, subject to $\sum_{u \in Q \setminus C} \bm{w}_u = \sum_{u \in Q \setminus C} \tilde{\bm{\Delta}}_u$ which is obtained by \texttt{SIM} via querying the functionality \texttt{IDEAL} for the set $Q \setminus C$ in \textbf{Unmasking}, $v \in F_u$ iff $M_C$ delivered a ciphertext to $u$ from $v$ in \textbf{ShareKeys}. For all $u \notin Q \setminus C$, \texttt{SIM} sets $\textrm{\textbf{PRG}}(b_u)$ arbitrarily.
	\end{itemize}
	\textit{Note}: This hybrid does not make use of the honest parties' $\bm{\Delta}_{U \setminus C}$, which implies the security of SecAgg against malicious adversaries.
	\item This hybrid is defined exactly as the previous one, except that \texttt{SIM} additionally aborts if $M_C$ queries the random oracle for noise on input $g_{u, k}$ for some honest user $u$ (i.e., the value sampled by \texttt{SIM} on behalf of $u$ in \textbf{Setup}) and either 
	i) before the adversary received the responses from the honest users in \textbf{Unmasking}, or
	ii) after such responses have been received, but $u \notin Q$, or
	iii) after such responses have been received and $u \in Q$, but $k < \lvert U \setminus Q \rvert + 1$.
	
	This hybrid is indistinguishable from the previous one because \texttt{SIM} will abort due to this new condition only if $M_C$ is able to guess one of the $g_{u, k}$, which can only happen with negligible probability (as they are chosen from the exponentially large domain $\mathbb{F}$).
	To see why the view of $M_C$ does not depend on $g_{u, k}$, we can analyze which of the view's components depend on any of those $g_{u, k}$.
	In case i), $M_C$ only receives from \texttt{SIM} at most $\lvert C \cap U \rvert$ shares of $g_{u, k}$ (sent by $u$ in \textbf{ShareKeys}, one for each of the corrupt clients).
	However, since $\lvert C \cap U \rvert < t$, the distribution of any such shares is independent of $g_{u, k}$ (because of the properties of secret sharing).
	In case ii), if $u \notin Q$, the honest user $u$ would abort in \textbf{ConsistencyCheck} and do not join \textbf{Unmasking}.
	No other honest user would send to the server any share of $g_{u, k}$ in \textbf{ExcessiveNoiseRemoval}, either. Thus \texttt{SIM} does not send have to send any to $M_C$.
	Even in case iii), the view of $M_C$ is still independent of $g_{u, k}$ for $k < \lvert U \setminus Q \rvert + 1$: neither would the honest user $u$ send such $g_{u, k}$ to the server in \textbf{Unmasking}, nor would any honest user send the shares of any of them to the server in \textbf{ExcessiveNoiseRemoval}.
	\item This hybrid is defined exactly as the previous one, with the only difference being that \texttt{SIM} now does not receive the inputs of the honest parties, but instead, uses the $\{g_{u, k}\}_{u \in Q \setminus C, \lvert U \setminus (Q \setminus C)) \rvert + 1 - \lvert  \cap U \rvert \leq k \leq T}$ that are already queried from \texttt{Ideal} in \textbf{Unmasking} (see Hyb$_3$) to send  to or share with $M_C$.\footnote{Note that $\lvert U \setminus (Q \setminus C)) \rvert + 1 - \lvert C \cap U \rvert$ (and thus, \texttt{Ideal}) is well-defined as i) it is always no smaller than $1$ given that $\lvert Q \setminus C \rvert \leq \lvert U \rvert - \lvert C \cap U \rvert$, and (ii) it can be larger than $T$, which makes the received $\{\bm{n}_{u, k}\}$ an empty set.}
	More in detail, for all $u \in Q \setminus C$, \texttt{SIM} substitutes its $g_{u, k}$ for $1 \leq k \leq  \lvert U \setminus Q \rvert$ with the corresponding one queried from \texttt{Ideal}, while setting those for $\lvert U \setminus Q \rvert + 1 \leq k \leq T$ with uniformly random values, during both \textbf{ShareKeys} and \textbf{Unmasking}.
	
	Note that this is always feasible for \texttt{SIM} as i) \texttt{Ideal} will not abort (by construction $\lvert Q \rvert \leq t$ and thus $\lvert Q \setminus C \lvert \leq t - \lvert C \cap U \rvert$), and ii) $\lvert U \setminus (Q \setminus C) \rvert + 1 - \lvert C \cap U \rvert \leq \lvert U \setminus Q \rvert + 1$ always hold.
	It is easy to see that this change does not modify the view of the adversary, and therefore it is perfectly indistinguishable from the previous one. Moreover, this hybrid does not make use of the honest party's $g_{U \setminus C}$ (and it does not use $\bm{\Delta}_{U \setminus C}$, either, from Hyb$_3$), and this concludes the proof.
\end{enumerate}

\end{proof}

Given Lemma~\ref{lemma:security}, we now proceed to the following theorem which provides an end-to-end malicious privacy guarantee of \texttt{XNoise} integrated with SecAgg.

\begin{reptheorem}{thm:security}[Privacy against Malicious Adversaries]
	For all $\eta$, $t$, $U$, $T$, $T_C$, $C \subseteq U \cup \{S\}$, $\bm{\Delta}_{U \setminus C}$ and $g_{U \setminus C}$. If $2t > \lvert U \rvert + \lvert C \cap U \lvert$ and $T_C \geq \lvert C \cap U \rvert$, probabilistic polynomial-time (PPT) adversary, given its view of an execution of $\pi$, cannot recover an aggregated update perturbed with noise less than the target level $\sigma_*^2$ with non-negligible probability.
\end{reptheorem}

\begin{proof}
From Lemma~\ref{lemma:security}, we see that from a real execution of $\pi$, no adversary knows anything more than a perturbed aggregated update, $\sum_{u \in L} \tilde{\bm{\Delta}}_u$, and a set of seeds used in the noise addition, $\{g_{u, k}\}_{u \in L, \lvert U \setminus L \rvert + 1 - \lvert C \cap U \rvert \leq k \leq T}$ , for a dynamically chosen subset of clients $L$ (where $\lvert L \rvert \geq t- \lvert C \cap U \rvert$) with more than negligible probability.

Next, the adversary is able to remove the noise $\bm{n}_{u, k}$ from the aggregated update w.r.t. each of the observed seed $g_{u, k}$.
The aggregated perturbed with the least noise, from the view of the adversary, is thus

\begin{align*}
	&\sum_{u \in L} \tilde{\bm{\Delta}}_u - \sum_{u \in L} \sum_{k = \lvert U \setminus L \rvert + 1 - \lvert C \cap U \rvert}^{T} \bm{n}_{u, k} \\
	=& \sum_{u \in L} \bm{\Delta}_u + \sum_{u \in L} \sum_{k=0}^T \bm{n}_{u, k} - \sum_{u \in L} \sum_{k = \lvert U \setminus L \rvert + 1 - \lvert C \cap U \rvert}^{T} \bm{n}_{u, k} \\
	=&  \sum_{u \in L} \bm{\Delta}_u + \sum_{u \in L} \sum_{k=0}^{ \lvert U \setminus L \rvert - \lvert C \cap U \rvert} \bm{n}_{u, k}.
\end{align*}

Given that the noise distribution $\chi$ is assumed to be closed under summation (\cref{sec:enforcement}), and that $\bm{n}_{u, 0} \sim \chi \left(\frac{\sigma_*^2}{\lvert U \rvert} \cdot \frac{t}{t - T_C} \right)$ and $\bm{n}_{u, k} \sim \chi \left(\frac{\sigma_*^2}{(\lvert U \rvert - k + 1) (\lvert U \rvert - k)} \cdot \frac{t}{t - T_C} \right)$ for honest users as specified in $\pi$ (\cref{sec:enforcement_consolidation}), the noise level of the above least-perturbed aggregated update is at

{\small

\begin{align*}
& \lvert L \rvert \left(\frac{\sigma_*^2}{\lvert U \rvert} \cdot \frac{t}{t - T_C} + \sum_{k=1}^{ \lvert U \setminus L \rvert - \lvert C \cap U \rvert} \frac{\sigma_*^2}{(\lvert U \rvert - k + 1) (\lvert U \rvert - k)} \cdot \frac{t}{t - T_C} \right) \\
=& \frac{t\lvert L \rvert\sigma_*^2}{t - T_C} \left( \frac{1}{\lvert U \rvert} + \sum_{k=1}^{ \lvert U \setminus L \rvert - \lvert C \cap U \rvert} \frac{1}{(\lvert U \rvert - k + 1) (\lvert U \rvert - k)} \right) \\
=& \frac{t\lvert L \rvert\sigma_*^2}{t - T_C} \left(\frac{1}{\lvert U \rvert} + \frac{1}{\lvert U \rvert - (\lvert U \setminus L \rvert - \lvert C \cap U \rvert)} - \frac{1}{\lvert U \rvert}  \right) \\
=& \frac{t\lvert L \rvert\sigma_*^2}{t - T_C} \cdot \frac{1}{\lvert L \rvert + \lvert C \cap U \rvert} \geq  \frac{t\sigma_*^2}{t - T_C} \cdot \frac{t - \lvert C \cap U \rvert}{t - \lvert C \cap U \vert + \lvert C \cap U \rvert} \\
=& \frac{t - \lvert C \cap U \rvert}{t - T_c} \sigma_*^2 \geq \sigma_*^2.
\end{align*}

}%

\end{proof}
\section{Optimizing Number of Chunks for Pipelining}~\label{sec:appendix_pipeline}

\begin{table*}[t]
	\centering
	\footnotesize
	\caption{The programming interface provided for developers to customize their own distributed DP algorithms and applications.}
	
	\begin{tabular}{ m{7em} | m{10em} | m{41em}}
		\hline
		\textbf{Category} & \textbf{Base Class} & \textbf{Customization Instruction} \\
		\hline
		\multirow{7}{*}{Distributed DP} & \texttt{ProtocolServer} & Overwrite \texttt{set\_graph\_dict()} to specify the distributed DP workflow for pipeline plan generation. Create one method for coordinating each operation, e.g., \texttt{encode\_data()} to instruct DP encoding. \\
		\cline{2-3}
		& \texttt{ProtocolClient} & Overwrite \texttt{set\_routine()} to specify the handler for each server's request. Create one method for processing each server's request, e.g., \texttt{encode\_data()} to encode local input on request. \\
		\cline{2-3}
		& \texttt{DPHandler} & Overwrite \texttt{init\_params()} to specify how to initialize DP parameters, \texttt{encode\_data()} and \texttt{decode\_data()} to how to perform DP encoding and decoding given a chunk of input, respectively. \\ \cline{2-3}
		& \texttt{AEHandler}, \texttt{KAHandler}, \texttt{PGHandler}, \texttt{SSHandler} & Overwrite the respective functionality for necessary security primitives: authenticated encryption, key agreement, pseudorandom generator, and secret sharing. \\
		\hline
		\multirow{3}{*}{Application} & \texttt{AppServer} & Overwrite \texttt{use\_output()} to specify how the server uses the output of distributed DP. \\
		\cline{2-3}
		& \texttt{AppClient} & Overwrite \texttt{prepare\_data()} and \texttt{use\_output()} to specify how a client prepares the input and consumes the output of distributed DP, respectively.  \\
		\hline
	\end{tabular}
	\label{tab:api}
\end{table*}

Suppose that $m$ is the number of chunks specified by the pipeline execution plan, $\tau_{s}$ is the predicted latency of executing a stage $s$ for any data chunk (note that each chunk has an equal size). 
Moreover, denote by $a$ the total number of stages in the distributed DP workflow, $b_{s, c}$, $f_{s, c}$ the beginning and finishing time of stage $s \in [a]$ for chunk $c \in [m]$, respectively.
To find the optimal number of chunks, $m^{*}$, \system{} solves the following optimization problem which minimizes the end-to-end latency $\tau$, which equates to the finishing time of the last stage for the last chunk, i.e., $f_{a, m}$:

\begin{align}
    m^{*} &= \argmin_{m \in \mathbb{N}_+} f_{a, m}, \nonumber \\
    \textrm{s.t.} \quad  
    f_{s, c} &= b_{s, c} + \tau_{s} \nonumber \\
    b_{s, c} &= \max \{o_{s, c}, r_{s, c}\}, \nonumber \\
    o_{s, c} &= \begin{cases}
        0, & \text{if } s = 0, \\
        f_{s-1, c}, & \text{otherwise},
    \end{cases} \label{eq:basic} \\
    r_{s, c} &= \begin{cases}
        0, & \text{if } s = 0 \text{ and } c = 0, \\
        f_{q, m} \text{ or } \perp, & \text{if } s \neq 0 \text{ and } c = 0, \\
        f_{s, c-1}, & \text{otherwise},\label{eq:exclusive}
    \end{cases}
\end{align}
where $q = \max_{i < s}\{i \mid r_i = r_s \}$ where $r_s$ is the dominant resource of stage $s$.\footnote{We abort the second case in Constraint~\eqref{eq:exclusive} if such $q$ does not exist.}
Note that constraint~\eqref{eq:basic} is enforced as each chunk needs to sequentially go through the execution of all stages. 
Constraint~\eqref{eq:exclusive} is enforced for achieving another two principles in the meantime: 1) each resource can be allocated at most one chunk to execute a stage at any time, and 2) allocating resource $r$ to execute stage $s$ for a chunk $c$ will be suspended, if there exists another chunk $c'$ which has not finished its execution of some previous stage $q < s$ that also uses the resource $r$. 
This optimization problem can be solved by enumeration within a small range (e.g., $m \in [20]$) provided that all $\tau_{s}$'s have been profiled (\cref{sec:pipeline_determining}).
\section{Generic Programming Interface}
~\label{sec:appendix_programming}

\PHB{Support for Diverse Distributed DP Algorithms.}
As outlined in Section~\ref{sec:implementation}, \system{} offers developers the capability to implement customized distributed DP algorithms using the provided generic programming interface.
To facilitate this, Table~\ref{tab:api} presents the base classes that can be tailored to specific requirements.

The first base class, \texttt{ProtocolServer}, serves as a foundation for implementing various \emph{server-side workflows} (as exemplified in Table~\ref{tab:pipeline_stage}).
To streamline the development of computation logic, \system{} supplies established communication primitives based on \texttt{Socket.IO}~\cite{python-socketio}), which developers can employ to handle server-client interactions effectively.
To specify the execution order and resource dependencies of different operations, \system{} further provides the \texttt{set\_graph\_dict()} method, allowing developers to annotate and aid \system{} in planning an optimized pipeline acceleration (see~\cref{sec:pipeline_determining}).

For implementing the \emph{client-side workflow}, developers can utilize the \texttt{ProtocolClient} base class, which can be customized to meet specific requirements.
Additionally, \system{} provides the \texttt{set\_routine()} method, enabling developers to specify which part of the client workflow is triggered by a specific server request.

In addition to facilitating high-level workflow construction, \system{} also equips developers with generic base classes to implement their own \emph{privacy and security primitives}. These include differential privacy mechanisms (\texttt{DPHandler}), authenticated encryption schemes (\texttt{AEHandler}), key agreement protocols (\texttt{KAHandler}), pseudorandom number generators (\texttt{PGHandler}), and secret sharing algorithms (\texttt{SSHandler}).
This empowers developers to easily integrate and experiment with diverse privacy and security building blocks.

\PHM{Support for Diverse Applications.}
Table~\ref{tab:api} also demonstrates that \system{} offers developers the flexibility to extend the aggregation framework to power various privacy-sensitive applications beyond FL. This versatility allows for seamless integration into different use cases.
To achieve this, developers can leverage the \texttt{AppServer} class by overriding the \texttt{use\_output()} method, which specifies how the aggregated result is utilized by the server.
Likewise, developers can instantiate their own \texttt{AppClient} and customize its behavior by overriding two methods. Firstly, the \texttt{prepare\_data()} method can be modified to define how a client's data is generated. Secondly, the \texttt{use\_output()} method can be configured to determine how the aggregated result is consumed by the client.
We have made \system{} open-source and warmly welcome contributions from the research community to further advance privacy-preserving data analytics research.
\section{Artifact Appendix}
~\label{sec:appendix_artifact}

\subsection{Abstract}
We have made the artifact available in our GitHub repository's main branch\footnote{At \url{https://github.com/SamuelGong/Dordis}.} for reproducing the experimental results presented in the paper.
Additionally, for long-term accessibility, we have also uploaded the artifact to a Zenodo repository.\footnote{At \url{https://doi.org/10.5281/zenodo.10023704}.}
The artifact comprises the source code of the \system{}, along with configuration files and Python scripts necessary to execute the experiments described in the paper.
Further instructions on using the artifact can be found in the subsequent sections and in the README file of the repository.

\subsection{Description \& Requirements}

\subsubsection{How to access}
The artifact is available at the above-mentioned repositories.

\subsubsection{Hardware dependencies}
Paper experiments were done in the following two modes:

\begin{itemize}
	\item \emph{Simulation}:
	When assessing the efficacy of noise enforcement (\cref{sec:evaluation_privacy}), each experiment can be effectively simulated using a single node equipped with multiple GPUs for acceleration.
	This is because the metrics of interest, namely privacy budget consumption and final model accuracy, remain unaffected by system speed or network bandwidth.
	As a reference, our evaluation employed a machine with 8 NVIDIA Geforce RTX 2080 Ti GPUs.
	\item \emph{Cluster Deployment}:
	During the evaluation of noise enforcement efficiency (\cref{sec:evaluation_efficiency_noise}) and pipeline acceleration (\cref{sec:evaluation_efficiency_pipeline}), the respective experiments necessitate an EC2 cluster setup.
	In this configuration, an \texttt{r5.4xlarge} instance serves as the server, while each client node is equipped with a \texttt{c5.xlarge} instance.
	This setup aims to replicate the computational capabilities of mobile devices.
	Additionally, the artifact includes scripts for simulating network heterogeneity by limiting the bandwidth of the client instances.
\end{itemize}

\subsubsection{Software dependencies}
The artifact requires an Anaconda environment with Python 3.
While the specific Python packages used are listed in the `requirement.txt' file of the GitHub repository, one can easily install them using provided scripts.

\subsubsection{Benchmarks} 
As mentioned in~\cref{sec:evaluation_methodology}, the provided artifact utilizes publicly available machine learning datasets, namely CIFAR-10~\cite{krizhevsky2009learning}, FEMNIST~\cite{caldas2018leaf}, and Reddit~\cite{reddit}.
Additionally, a variety of models are employed, including ResNet-18~\cite{he2016deep}, a customized CNN~\cite{kairouz2021distributed, agarwal2021skellam}, VGG-19~\cite{simonyan2014very}, and Albert~\cite{lan2020albert}.
The artifact includes automated scripts that can be used to download and preprocess all the datasets.

\subsection{Set-up}
Installation and configuration steps required to prepare the
artifact environment are described in the `Simulation' and `Cluster Deployment' section of the repository README file.

\subsection{Evaluation workflow}

\subsubsection{Major Claims}

As mentioned in~\cref{sec:evaluation}, the major experimental claims made
in the paper are:

\begin{itemize}
    \item C1: Our noise enforcement scheme, \texttt{XNoise}, guarantees the consistent achievement of the desired privacy level, even in the presence of client dropout, while maintaining the model utility.
	This claim is supported by the simulation experiment (E1) outlined in~\cref{sec:evaluation_privacy}, with the corresponding results presented in Figure~\ref{fig:evaluation_privacy}, Table~\ref{tab:utility}, and Figure~\ref{fig:evaluation_round_acc}.
    \item C2: The \texttt{XNoise} scheme introduces acceptable runtime overhead, with the network overhead being scalable with respect to the model's expanding size. S
    This can be proven by the cluster experiment (E2) described in~\ref{sec:evaluation_efficiency_noise} whose results are reported in Figure~\ref{fig:evaluation_efficiency} and Table~\ref{tab:network}.
    \item C3: The pipeline-parallel aggregation design employed by \system{} significantly boosts training speed, leading to a remarkable improvement of up to 2.4$\times$ in the training round time.
    This finding is supported by the cluster experiment (E3) discussed in~\cref{sec:evaluation_efficiency_pipeline}, and the corresponding results are depicted in Figure~\ref{fig:evaluation_efficiency}.
\end{itemize}

\subsubsection{Experiments}
The artifact includes a dedicated section in the repository's README file titled `Reproducing Experimental Results'. This section provides a comprehensive guide on conducting the necessary experiments to replicate the main claims mentioned earlier.
For every experiment, we meticulously document all the essential setup requirements, estimated costs and timeframes, a concise explanation of the procedure, the anticipated outcomes, and an explicit link to one of the aforementioned three claims.

\end{document}